\numberwithin{equation}{section}
\numberwithin{figure}{section}
\numberwithin{table}{section}
\numberwithin{theorem}{section}
\numberwithin{lemma}{section}
\newcommand{\loja}{\L{}ojasiewicz\xspace}
\begin{document}

\title{Convergence of gradient-based algorithms for the Hartree-Fock
  equations}
\date{\today}

\address{Université Paris-Dauphine, CEREMADE, Place du Maréchal Lattre
  de Tassigny, 75775 Paris Cedex 16, France.}
\email{levitt@ceremade.dauphine.fr} \thanks{Support from the grant
  ANR-10-BLAN-0101 of the French Ministry of Research is gratefully
  acknowledged}

\subjclass[2010]{35Q40,65K10}
\keywords{Hartree-Fock equations, \loja inequality, optimization on
  manifolds}

\begin{abstract}
  The numerical solution of the Hartree-Fock equations is a central
  problem in quantum chemistry for which numerous algorithms
  exist. Attempts to justify these algorithms mathematically have been
  made, notably in \cite{cances2000}, but, to our knowledge, no
  complete convergence proof has been published. In this paper, we
  prove the convergence of a natural gradient algorithm, using a
  gradient inequality for analytic functionals due to
  \loja\cite{lojasiewicz1965ensembles}. Then, expanding upon the
  analysis of \cite{cances2000}, we prove convergence results for the
  Roothaan and Level-Shifting algorithms. In each case, our method of
  proof provides estimates on the convergence rate. We compare these
  with numerical results for the algorithms studied.
\end{abstract}

\maketitle
\section{Introduction}
In quantum chemistry, the Hartree-Fock method is one of the simplest
approximations of the electronic structure of a molecule. By assuming
minimal correlation between the $N$ electrons, it reduces
Schr\"odinger's equation, a linear partial differential equation on
$\R^{3N}$, to the Hartree-Fock equations, a system of $N$ coupled
nonlinear equations on $\R^{3}$. This approximation makes it much more
tractable numerically. It is used both as a standalone description of
the molecule and as a starting point for more advanced methods, such
as the Møller-Plesset perturbation theory, or multi-configuration
methods. Mathematically, the Hartree-Fock method leads to a coupled
system of nonlinear integro-differential equations, which are
discretized by expanding the solution on a finite Galerkin basis. The
resulting nonlinear algebraic equations are then solved iteratively,
using a variety of algorithms, the convergence of which is the subject
of this work.

The mathematical structure of the Hartree-Fock equations was
investigated in the 70's, culminating in the proof of the existence of
solutions by Lieb and Simon\cite{lieb1977}, later generalized by
Lions\cite{lions1987}. On the other hand, despite their ubiquitous use
in computational chemistry, the convergence of the various algorithms
used to solve them is still poorly understood. A major step forward in
this direction is the recent work of Cancès and Le
Bris\cite{cances2000}. Using the density matrix formulation, they
provided a mathematical explanation for the oscillatory behavior
observed in the simplest algorithm, the Roothaan method, and proposed
the Optimal Damping Algorithm (ODA), a new algorithm inspired directly
by the mathematical structure of the constraint
set\cite{cances2000can}. This algorithm was designed to decrease the
energy at each step, and linking the energy decrease to the difference
of iterates allowed the authors to prove that this algorithm
``numerically converges'' in the weak sense that $\norm{D_k - D_{k-1}}
\to 0$. In addition, the algorithm numerically converges towards an
Aufbau solution\cite{cancesODA}. This, to our knowledge, is the
strongest convergence result available for an algorithm to solve the
Hartree-Fock equations.

However, this is still mathematically unsatisfactory, as it does not
guarantee convergence, and merely prohibits fast divergence. The
difficulty in proving convergence of the algorithms used to solve the
Hartree-Fock equations lies in the lack of understanding of the
second-order properties of the energy functional (for instance, there
are no local uniqueness results available). In other domains, the
convergence of gradient-based methods has been established using the
\loja inequality for analytic
functionals\cite{lojasiewicz1965ensembles} (see for instance
\cite{salomon2005convergence, haraux2003rate}). This method of proof
has the advantage of not requiring any second-order information.

In this paper, we use a gradient descent algorithm to solve the
Hartree-Fock equations. This algorithm builds upon ideas from
differential geometry\cite{edelman1998geometry} and the various
projected gradient algorithms used in the context of quantum
chemistry\cite{cances2008projected,mcweeny1956density,alouges2009preconditioned}. To
our knowledge, this particular algorithm has never been applied to the
Hartree-Fock equations. Although it lacks the sophistication of modern
minimization methods (for instance, see the trust region methods of
\cite{francisco2004globally} and \cite{host2008augmented}), it is the
most natural generalization of the classical gradient descent, and, as
such, the simplest one to analyze mathematically. For this algorithm,
following the method of \cite{salomon2005convergence}, we prove
convergence, and obtain explicit estimates on the convergence rate. We
also apply the method to the widely used Roothaan and Level-Shifting
algorithms, effectively linking these fixed-point algorithms to
gradient methods.

The structure of this paper is as follows. We first introduce the
Hartree-Fock problem in the mathematical setting of density matrices
and prove a \loja inequality on the constrained parameter space. We
then introduce the gradient algorithm, and prove some estimates. We
show the convergence and obtain convergence rates for this algorithm,
then extend our method to the Roothaan and Level-Shifting algorithm,
using an auxiliary energy functional following \cite{cances2000}. We
finally test all these results numerically and compare the convergence
of the algorithms.
\section{Setting}
\label{setting}

We are concerned with the numerical solution of the Hartree-Fock
equations. We will consider for simplicity of notation the spinless
Hartree-Fock equations, where each orbital $\phi_i$ is a function in
$L^2(\R^3,\R)$, although our results are easily transposed to other
variants such as General Hartree-Fock (GHF) and Restricted
Hartree-Fock (RHF).

In this paper, we consider a Galerkin discretization space with finite
orthonormal basis $(\chi_i)_{i=1\dots N_b}$. In this setting, the
orbitals $\phi_i$ are expanded on this basis, and the operators we
consider are $N_b \times N_b$ matrices. This finite dimension
hypothesis is crucial for our results, and we comment on it in the
conclusion.

The Hartree-Fock problem consists in minimizing the total energy of a
N-body system. We describe the mathematical structure of the energy
functional and the minimization set, and introduce a natural gradient
descent to solve this problem numerically.
\subsection{The energy}
We consider the quantum N-body problem of $N$ electrons in a potential
$V$ (in most applications, $V$ is the Coulombic potential created by a
molecule or atom). In the spinless Hartree-Fock model, this problem is
simplified by assuming that the N-body wavefunction $\psi$ is a single
Slater determinant of $N$ $L^{2}$-orthonormal orbitals $\phi_i$. A
simple calculation then shows that the energy of the wavefunction
$\psi$ can be expressed as a function of the orbitals $\phi_i$,
\begin{align*}
  \mathcal E(\psi) = \sum_{i=1}^N \int_{\R^3} \frac 1 2 (\nabla \phi_i)^2 +
  \int_{\R^3} V \rho + \frac 1 2
  \int_{\R^3} \int_{\R^3} \frac{\rho(x) \rho(y) - \tau(x,y)^2}{|x-y|}
  \dif x \dif y,
\end{align*}
where $\tau(x,y) = \sum_{i=1}^N \phi_i(x) \phi_i(y)$ and $\rho(x) =
\tau(x,x)$.

The energy is then to be minimized over all sets of orthonormal
orbitals $\phi_i$. An alternative way of looking at this problem is to
reformulate it using the density operator $D$. This operator, defined
by its integral kernel $D(x,y) = \tau(x,y)$, can be seen to be the
orthogonal projection on the space spanned by the $\phi_i$'s. The energy
can be written as a function of $D$ only:
\begin{align}
  E(D) = \Tr((h + \frac 1 2 G(D)) D),
\end{align}
where
\begin{align*}
  h &= -\frac 1 2 \Delta + V,\\
  (G(D) \phi)(x) &= \lp\rho \star \frac 1 {|\cdot|}\rp(x) \phi(x) -
  \int_{y} \frac{\phi(y) \tau(x,y)}{|x-y|} .
\end{align*}
This time, the energy is to be minimized over all orthogonal projection
operators of rank $N$. In the discrete setting, the orbitals
$\phi_{j}$ are discretized as $\phi_{j} = \sum_{i=1}^{N_{b}} c_{ij}
\chi_{i}$, and the operators $D$, $h$, and $G(D)$ become $N_{b} \times
N_{b}$ matrices.

\subsection{The manifold $\mathcal P$}
\label{manifold}
The Hartree-Fock energy is to be minimized over the set of density
matrices
\begin{align*}
  D \in \mathcal P = \{ D \in M_{N_{b}}(\R), D^T = D, D^2 = D, \Tr D = N\}.
\end{align*}

The manifold $\mathcal P$ is equipped with the canonical inner product
in $M_{{N_{b}}}(\R)$
\begin{align*}
  \lela A,B \rira &= \Tr (A^T B).
\end{align*}
We denote by $\norm{A} = \sqrt{\lela A, A\rira}$ the Frobenius (or
Hilbert-Schmidt) norm of $A$, which is the most natural here, and by
$\norm{A}_\text{op} = \max_{\norm{x} = 1} \norm{A x}$ the operator
norm of $A$.

The Riemannian structure of $\mathcal P$ allows us to compute the
gradient of $E$. The tangent space $T_D \mathcal P$ at a point $D$ is
the set of $\Delta$ such that $D + \Delta$ verifies the constraints up
to first order in $\Delta$, that is, such that $\Delta^T = \Delta, D
\Delta + \Delta D = \Delta, \Tr \Delta = 0$. Block-decomposing
$\Delta$ on the two orthogonal spaces $\text{range} (D)$ and $\text{ker} (D)$,
this implies that the tangent space $T_D \mathcal P$ is the set of
matrices $\Delta$ of the form
\begin{align*}
  \Delta &= \mat{0&A^T\\A&0},
\end{align*}
where $A$ is an arbitrary $(N_{b}-N) \times N$ matrix.

Hence, the projection on the tangent space of an arbitrary symmetric
matrix $M$ is given by
\begin{align*}
  P_D(M) &= D M (1-D) + (1-D) M D\\
  &= [D,[D,M]].
\end{align*}

Note that if $M$ has decomposition $\mat{B&A^{T}\\A&C}$, then
$[D,M] = \mat{0&A^{T}\\-A&0}$ and $[D,[D,M]] = \mat{0&A^{T}\\A&0}$,
which shows that $\norm{[D[D,M]]} = \norm{[D,M]}$, a property that
will be useful in the sequel.

We can now compute the gradient of $E$. First, the unconstrained
gradient in $M_{N_{b}}(\R)$ is
\begin{align*}
  \nabla E(D) &= F_D = h + G(D),
\end{align*}
the Fock operator describing the mean field generated by the electrons
of $D$. We obtain the constrained gradient $\nabla_{\mathcal P} E$ by
projecting onto the tangent space:
\begin{align*}
  \nabla_{\mathcal P} E(D) &= P_D(\nabla E(D))\\
  &= [D,[D,F_D]].
\end{align*}

\subsection{\loja inequality}
The \loja inequality for a functional $f$ around a critical point
$x_0$ is a local inequality that provides a lower bound on
$\nabla f$. Its only hypothesis is analyticity. In particular, no
second order information is needed, and the inequality accommodates
degenerate critical points.
\subsubsection{The classical \loja inequality}
\begin{theorem}[\loja inequality in $\R^n$]
  Let $f$ be an analytic functional from $\R^n$ to $\R$. Then, for
  each $x_0 \in \R^n$, there is a neighborhood $U$ of $x_0$ and two
  constants $\kappa > 0$, $\theta \in (0,1/2]$ such that when $x \in
  U$,
  \begin{align*}
    \abs{f(x) - f(x_0)}^{1-\theta} \leq \kappa \norm{\nabla f (x)}.
  \end{align*}
\end{theorem}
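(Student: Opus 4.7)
The plan is to proceed by contradiction, using the curve selection lemma of real semianalytic geometry together with Puiseux expansions of analytic functions along analytic arcs. After translating so that $x_0 = 0$ and $f(0) = 0$, the nondegenerate case $\nabla f(0) \neq 0$ is immediate with $\theta = 1/2$ from the estimate $\abs{f(x)} = O(\norm{x})$ together with a positive lower bound on $\norm{\nabla f}$ in a small neighborhood, so I would focus on the critical case $\nabla f(0) = 0$.

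Suppose the conclusion fails for every rational $\theta \in (0,1/2]$, every $\kappa > 0$, and every neighborhood of $0$. After clearing the fractional exponent, the set
\begin{align*}
S_{\theta,\kappa} = \{ x \in U : \abs{f(x)}^{2(1-\theta)q} > \kappa^{2q}\norm{\nabla f(x)}^{2q}\},
\end{align*}
where $q$ is the denominator of $\theta$, is semianalytic and has $0$ in its closure. The curve selection lemma then supplies an analytic arc $\gamma \colon [0,\varepsilon) \to U$ with $\gamma(0) = 0$, $\gamma(t) \neq 0$ for $t > 0$, along which $\abs{f(\gamma(t))}^{1-\theta}/\norm{\nabla f(\gamma(t))} \to \infty$.

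Analyticity of $f \circ \gamma$ and of $\nabla f \circ \gamma$ gives leading-order expansions $f(\gamma(t)) = at^p + o(t^p)$ and $\norm{\nabla f(\gamma(t))} \sim b t^q$, with $p \geq 1$; the degenerate case $f \circ \gamma \equiv 0$ is dealt with by a separate perturbation argument. The chain rule $(f\circ\gamma)'(t) = \lela \nabla f(\gamma(t)), \gamma'(t)\rira$ combined with boundedness of $\gamma'$ forces $q \leq p-1$. Hence the ratio scales like $t^{p(1-\theta)-q}$, which stays bounded as soon as $\theta \leq 1/p$, contradicting the assumed blow-up. To produce a \emph{single} $\theta$ valid for all arcs, one uses a uniform bound on $p$ over the analytic arcs through $0$, which comes from Noetherianity of the ring of analytic germs at $0$ combined with a compactness argument on a sphere of small radius.

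The main obstacle is precisely this invocation of the curve selection lemma and Puiseux analysis, and in particular the uniform control of the vanishing order $p$, which are the nontrivial inputs from real analytic geometry. An alternative route would be to invoke Hironaka's resolution of singularities to locally factor $f = u \cdot m$ with $u$ an analytic unit and $m$ a monomial, reducing the inequality to an explicit computation in which $\theta$ is determined by the exponents of $m$. Either way the result relies on deep geometric tools, which is why the \loja inequality is typically cited as a black box rather than proved from scratch.
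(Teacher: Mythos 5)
The paper does not prove this theorem; it states it as a classical result of \loja and cites \cite{lojasiewicz1965ensembles}, so there is no in-paper argument to compare against. Taken on its own, your sketch handles the regular case and sets up the arc analysis correctly, but the step that is supposed to close the argument --- ``a uniform bound on $p$ over the analytic arcs through $0$, which comes from Noetherianity of the ring of analytic germs at $0$ combined with a compactness argument on a sphere'' --- contains a genuine error, and it is precisely the step where the actual difficulty of the theorem lives.

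From $(f\circ\gamma)'(t) = \langle \nabla f(\gamma(t)), \gamma'(t)\rangle$ and boundedness of $\gamma'$ you correctly get $q \leq p-1$, hence $1 - q/p \geq 1/p$, so that $\theta \leq 1/p$ kills the blow-up along a fixed arc. But the vanishing order $p$ of $f$ along analytic arcs through a critical point is \emph{not} uniformly bounded, and Noetherianity of the local ring controls ideals, not orders of pull-backs along arbitrary parametrized arcs. Take $f(x,y) = x^2$ in $\R^2$ and $\gamma_n(t) = (t^n,t)$: then $f(\gamma_n(t)) = t^{2n}$, so $p = 2n \to \infty$, and the estimate $\theta \leq 1/p$ degenerates to $\theta \leq 0$. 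The \loja inequality itself is of course unharmed in this example, since $\norm{\nabla f(\gamma_n(t))} = 2t^n$ gives $q/p = 1/2$ and hence $\theta = 1/2$ works on every such arc. What this shows is that the quantity which needs a uniform bound is $q(\gamma)/p(\gamma)$, bounded strictly away from $1$, and not $p(\gamma)$ itself; the crude inequality $q \leq p-1$ only yields $q/p \leq 1 - 1/p$, which tells you nothing once $p$ is unbounded. Proving $\sup_\gamma q(\gamma)/p(\gamma) < 1$ is exactly where one must bring in the heavy machinery you gesture at in your final paragraph (resolution of singularities and the reduction to normal crossings, or \loja's original preparation-theorem arguments); the curve-selection reduction together with $q\leq p-1$ is a restatement of the problem, not a proof of it. So the overall strategy is the standard one, but as written the argument does not close.
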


This inequality is trivial when $x_0$ is not a critical point. When
$x_{0}$ is a critical point, a simple Taylor expansion shows that, if
the Hessian $\nabla^2 f(x_0)$ is invertible, we can choose $\theta =
\frac 1 2$ and $\kappa > \frac1 {\sqrt{ 2 |\lambda_1|}}$, where
$\lambda_1$ is the eigenvalue of lowest magnitude $\nabla^2
f(x_0)$. When $\nabla^2 f(x_0)$ is singular (meaning that $x_0$ is a
degenerate critical point), the analyticity hypothesis ensures that
the derivatives cannot all vanish simultaneously, and that there
exists a differentiation order $n$ such that the inequality holds with
$\theta = \frac 1 n$.

\subsubsection{\loja inequality on $\mathcal P$}
We now extend this inequality to functionals defined on the manifold
$\mathcal P$.

\begin{theorem}[\loja inequality on $\mathcal P$]
  \label{lojap}
  Let $f$ be an analytic functional from $\mathcal P$ to $\R$. Then, for
  each $D_0 \in \mathcal P$, there is a neighborhood $U$ of $D_0$ and two
  constants $\kappa > 0$, $\theta \in (0,1/2]$ such that when $D \in
  U$,
  \begin{align*}
    \abs{f(D) - f(D_0)}^{1-\theta} \leq \kappa \norm{\nabla_{\mathcal P} f (D)}.
  \end{align*}
\end{theorem}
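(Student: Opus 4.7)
The plan is to reduce the Riemannian statement on $\mathcal P$ to the Euclidean Łojasiewicz inequality on $\mathbb R^n$ (with $n = \dim \mathcal P = N(N_b - N)$) by means of an analytic local chart. The crucial observations will be that $\mathcal P$ is a real-analytic submanifold of $M_{N_b}(\R)$ and that, in any such chart, the Euclidean gradient of the pullback is controlled from above by the Riemannian gradient of $f$.

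First, I would construct an explicit analytic parametrization $\varphi$ of a neighborhood of $D_0$ in $\mathcal P$. Working in a basis in which $D_0 = \mathrm{diag}(I_N, 0)$, I parametrize the tangent space by the off-diagonal block $A \in M_{(N_b-N)\times N}(\R) \simeq \R^n$, set $X(A) = \bigl[\begin{smallmatrix}0 & -A^T\\ A & 0\end{smallmatrix}\bigr]$, and define $\varphi(A) = e^{X(A)} D_0 e^{-X(A)}$. This map is real-analytic, $\varphi(0) = D_0$, $d\varphi(0)$ is an isometry from $\R^n$ onto $T_{D_0}\mathcal P$ (up to the chosen identification), so $\varphi$ is a diffeomorphism from some neighborhood $V$ of $0$ onto a neighborhood of $D_0$ in $\mathcal P$.

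Next, I set $g := f \circ \varphi \colon V \to \R$. Since $f$ is analytic on $\mathcal P$ (in the sense that it is the restriction of an analytic function on an open set of $M_{N_b}(\R)$, which is clear in the Hartree-Fock case since $E$ is polynomial in $D$), and $\varphi$ is analytic, $g$ is analytic on $V$. The classical Łojasiewicz inequality applied to $g$ at $0$ yields constants $\kappa_0 > 0$, $\theta \in (0,1/2]$, and a neighborhood $V_0 \subset V$ of $0$ such that
\begin{align*}
  \lvert g(A) - g(0) \rvert^{1-\theta} \leq \kappa_0 \, \norm{\nabla g(A)} \qquad \text{for all } A \in V_0.
\end{align*}

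It remains to translate $\norm{\nabla g(A)}$ back into $\norm{\nabla_{\mathcal P} f(D)}$. For every coordinate vector $e_i$, we have $\partial_i g(A) = \lela \nabla f(\varphi(A)), \partial_i \varphi(A) \rira$, and since $\partial_i \varphi(A) \in T_{\varphi(A)} \mathcal P$, the unconstrained gradient can be replaced by its tangential projection, giving $\partial_i g(A) = \lela \nabla_{\mathcal P} f(\varphi(A)), \partial_i \varphi(A) \rira$. Cauchy-Schwarz then yields $\norm{\nabla g(A)} \leq C(A) \norm{\nabla_{\mathcal P} f(\varphi(A))}$, where $C(A)$ is the operator norm of the Jacobian $d\varphi(A)$ viewed as a map $\R^n \to (M_{N_b}(\R), \norm{\cdot})$. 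Since $C$ is continuous and $C(0)$ is finite (equal to $1$ with the chosen identification), we may shrink $V_0$ so that $C$ is bounded by some $C_0$ on $V_0$. Setting $U = \varphi(V_0)$ and $\kappa = \kappa_0 C_0$ yields the inequality for $D \in U$.

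The only potentially delicate point is the verification that $f$ analytic on $\mathcal P$ pulls back to an analytic function in the chart; I expect this to be immediate from the chosen definition of analyticity on $\mathcal P$ (and trivial for the Hartree-Fock energy, which is polynomial). The remainder of the argument is a routine transport of the Euclidean inequality through a bi-Lipschitz analytic parametrization, combined with the elementary but essential fact that differentiating $f \circ \varphi$ only sees the component of $\nabla f$ tangent to $\mathcal P$, which is precisely $\nabla_{\mathcal P} f$.
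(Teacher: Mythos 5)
Your proposal is correct and essentially identical to the paper's proof: the map $\varphi(A) = e^{X(A)} D_0 e^{-X(A)}$ (with $X(A) = -[D_0,\Delta]$ under the obvious identification of $A$ with $\Delta$) is exactly the retraction $R_{D_0}$ used in the paper, and the paper likewise pulls $f$ back through this analytic local diffeomorphism, applies the Euclidean Łojasiewicz inequality, and transports the gradient bound back using the adjoint $d R_{D_0}(\Delta)^*$ and its continuity near $0$ — which is the same step you carry out via Cauchy--Schwarz and the boundedness of $\norm{d\varphi(A)}_{\mathrm{op}}$.
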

\begin{proof}
  Let $D_0 \in \mathcal P$. Define the map $R_{D_0}$ from $T_{D_0} \mathcal P$
  to $\mathcal P$ by
  \begin{align*}
    R_{D_0}(\Delta) &= U D_0 U^T,\\
    U &= \exp(-[D_0,\Delta]).
  \end{align*}

  \begin{figure}[H]
    \centering
    \scalebox{0.7}{\input{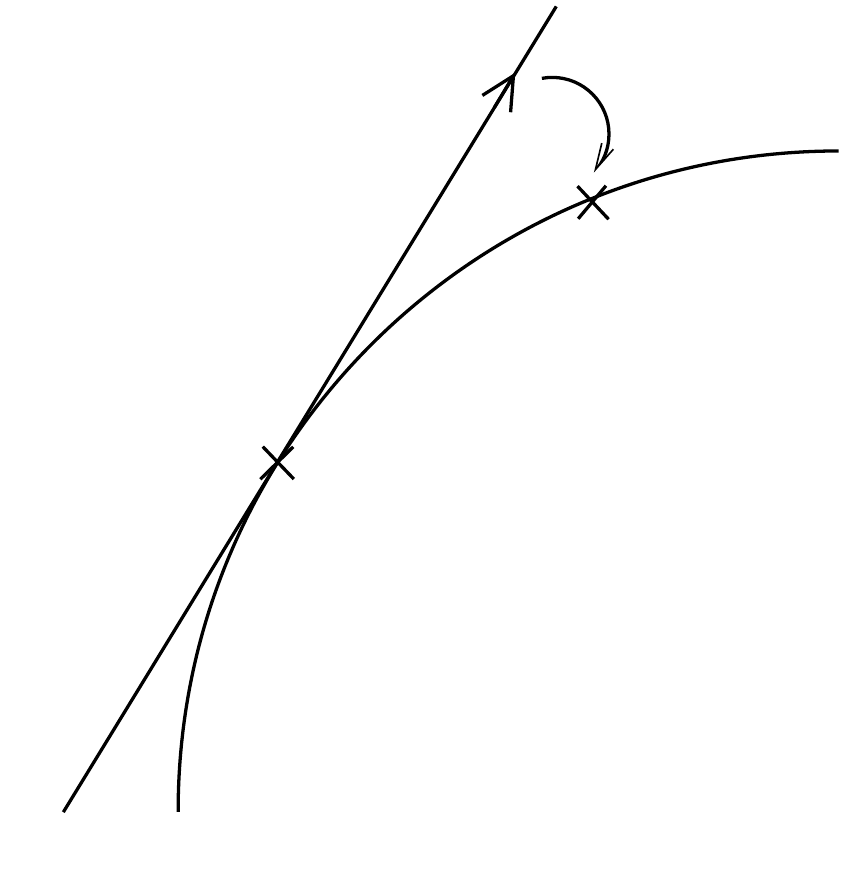_t}}
    \caption{The map $R_{D_0}$}
    \label{fig:manifolds}
  \end{figure}

  This map is analytic and verifies $R_{D_0}(0) = D_0$, $\dif
  R_{D_0}(0) = \text{id}_{T_{D_0} \mathcal P}$. Therefore, by the
  inverse function theorem, the image of a neighborhood of zero
  contains a neighborhood of $D_0$. We now compute the gradient of
  $\tilde f = f \circ R_{D_0}$ at a point $\Delta$, with $D =
  R_{D_0}(\Delta)$.
  \begin{align*}
    \tilde f(\Delta + \delta) &= f(D) + \lela \nabla_{\mathcal P} f(D), \dif
    R_{D_0}(\Delta)  \delta\rira + O(\delta^2)\\
    &= f(D) + \lela \dif R_{D_0}(\Delta)^* \nabla_{\mathcal P} f(D),
    \delta\rira + O(\delta^2)\\
    &= f(D) + \lela P_{D_0} \dif R_{D_0}(\Delta)^* \nabla_{\mathcal P} f(D),
    \delta\rira + O(\delta^2).
  \end{align*}

  We deduce
  \begin{align*}
    \nabla_{T_{D_0} \mathcal P} \tilde f(\Delta) &= P_{D_0} \dif R_{D_0}(\Delta)^* \nabla_{\mathcal P} f(D).
  \end{align*}

  We can now apply the \loja inequality to $\tilde f$, which is an
  analytic functional defined on the Euclidean space $T_{D_0} \mathcal
  M$. We obtain a neighborhood of zero in $T_{D_0} \mathcal P$, and
  therefore a neighborhood $U$ of $D_0$ on which
  \begin{align*}
    \abs{f(D) - f(D_0)}^{1-\theta} \leq \kappa \norm{P_{D_0} \dif R_{D_0}(\Delta)^* \nabla_{\mathcal P} f(D)}.
  \end{align*}
  As $\dif R_{D_0}^*$ is continuous in $\Delta$, up to a change
  in the neighborhood $U$ and the constant $\kappa$,
  \begin{align*}
    \abs{f(D) - f(D_0)}^{1-\theta} \leq \kappa \norm{\nabla_{\mathcal P} f(D)}.
  \end{align*}
\end{proof}

\section{The gradient method}
\subsection{Description of the method}
The gradient flow
\begin{align}
  \od D t &= - \nabla_{\mathcal P} E(D)\notag\\
  &= - [D,[D,F_D]]\label{gradientflow}
\end{align}
is a way of minimizing the energy over the manifold $\mathcal P$. This
continuous flow was already used to solve the Hartree-Fock equations
in \cite{alouges2009preconditioned} (although the authors used a
formulation in terms of orbitals, whereas we use density matrices).

The naive Euler discretization
\begin{align*}
  D_{k+1} &= D_k - t [D_k,[D_k,F_k]]
\end{align*}
is not suitable because it does not stay on $\mathcal P$. A variety of
approaches deal with this problem. One of the first algorithms to
solve the Hartree-Fock equations \cite{mcweeny1956density} used a
purification method to project $D_{k+1}$ back onto $\mathcal P$. More
recently, an orthogonal projection on the convex hull of $\mathcal P$
was used for that purpose \cite{cances2008projected}. Although we
focus in this paper on a different gradient method, such projection
methods have the same behavior for small stepsizes and can be treated
in the same framework, provided one can prove results similar to
Lemmas \ref{lemma_control_curv} and \ref{lemma_first_order} below.

We look for $D_{k+1}$ on a curve on $\mathcal P$ that is tangent to
$\nabla_{\mathcal P} E(D_k)$. A natural curve on $\mathcal P$ is the
change of basis
\begin{align*}
  D'(t) &= U_t D U_t^T,
\end{align*}
where $U_t$ is a smooth family of orthogonal matrices. If we take
\begin{align*}
  U_t &= \exp(t [D,F_D]),
\end{align*}
we get
\begin{align*}
  \eval{\od {D'} t}_{t=0} &= - [D,[D,F_{D}]],
\end{align*}
so $D'(t)$ is a smooth curve on $\mathcal P$, tangent to the gradient
flow at $t = 0$.


Our gradient method with a fixed step $t$ is then
\begin{align}
  D_{k+1} = U_k D_k U_{k}^{T},
\end{align}
with
\begin{align}
  U_k &= \exp(t [D_k,F_k]).
\end{align}
This method, as well as various generalizations, is described in
\cite{edelman1998geometry}.

We now prove a number of lemmas which are the main ingredients of the
convergence proof. First, we bound the second derivative of the energy
to obtain quantitative estimates on the energy decrease, then we link
the difference of iterates $D_{k+1} - D_k$ to the gradient
$\nabla_{\mathcal P}E(D_k)$, and finally we use the \loja inequality
to establish convergence.
\subsection{Derivatives}
We start from a point $D_0$ and compute the derivatives of $E$ along
the curve $D_t = U_t D_0 U_{-t}$. For ease of notation we will write
$\epsilon(t) = E(D_{t})$, $F_t = F(D_t)$ and $C_t = [D_t, F_t]$.
\begin{align*}
  \od {D_{t}} t &= \od {U_t} t D_0 U_{-t} + U_t D_0 \od {U_{-t}} t\\
  &= [C_0,D_t],\\
  \od[n] {D_{t}} t &= \od[n-1] {}t [C_0, D_t]\\
  &= \underbrace{[C_0,[C_0,\dots[C_0,D_t]\dots]]}_{n \text{ times }C_0},\\
  \od \epsilon t  &= \Tr(F_t [C_0,D_t]),\\
  \eval{\od \epsilon t}_{t=0}&= - \norm{C_0}^2,\\
  \od[2] \epsilon t &= \Tr(F_t [C_0,[C_0,D_t]]) + \Tr (G([C_0,D_t])[C_0,D_t]).
\end{align*}

\subsection{Control on the curvature}
\begin{lemma} \label{lemma_control_curv}There exists $\alpha > 0$ such that for every $D_0$ and
  $t$,
  \begin{align*}
    \abs{\od[2] \epsilon t} (t) \leq \alpha \norm{C_0}^2.
  \end{align*}
\end{lemma}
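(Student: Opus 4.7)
The plan is to bound each of the two terms in
\[
\ddot{\epsilon}(t) \;=\; \Tr\bigl(F_t\,[C_0,[C_0,D_t]]\bigr) \;+\; \Tr\bigl(G([C_0,D_t])\,[C_0,D_t]\bigr)
\]
separately, using (i) the boundedness of $\mathcal{P}$ in $M_{N_b}(\R)$, which forces $\|D_t\|$ and $\|D_t\|_{\text{op}}$ to be uniformly bounded (any $D \in \mathcal{P}$ is an orthogonal projection of rank $N$, so $\|D\|_{\text{op}} = 1$ and $\|D\| = \sqrt{N}$), and (ii) the standard commutator estimate $\|[A,B]\| \leq 2 \|A\|_{\text{op}}\|B\|$, together with $\|A\|_{\text{op}} \leq \|A\|$ and the Cauchy--Schwarz-type inequality $|\Tr(A^T B)| \leq \|A\|\,\|B\|$ for the Frobenius inner product.

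First I would handle the first term. Applying the commutator inequality twice,
\[
\|[C_0,[C_0,D_t]]\| \;\leq\; 2\|C_0\|_{\text{op}}\,\|[C_0,D_t]\| \;\leq\; 4\|C_0\|_{\text{op}}^{2}\,\|D_t\| \;\leq\; 4\sqrt{N}\,\|C_0\|^{2}.
\]
Since $D_t \in \mathcal{P}$ is bounded and $G$ is linear in $D$, the Fock operator $F_t = h + G(D_t)$ is uniformly bounded in $\|\cdot\|$, say by a constant $M_F$ depending only on $h$, $G$, and $N$. This gives $|\Tr(F_t\,[C_0,[C_0,D_t]])| \leq 4\sqrt{N}\,M_F\,\|C_0\|^{2}$.

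For the second term, I would use that $G : M_{N_b}(\R) \to M_{N_b}(\R)$ is a fixed linear map (independent of $D_0$, $t$), hence has a finite operator norm $M_G$ with respect to the Frobenius norm. Then
\[
\bigl|\Tr\bigl(G([C_0,D_t])\,[C_0,D_t]\bigr)\bigr| \;\leq\; M_G\,\|[C_0,D_t]\|^{2} \;\leq\; 4\,M_G\,\|D_t\|_{\text{op}}^{2}\,\|C_0\|^{2} \;\leq\; 4\,M_G\,\|C_0\|^{2}.
\]
Combining the two bounds yields the conclusion with $\alpha = 4\sqrt{N}\,M_F + 4\,M_G$, independent of $D_0$ and $t$.

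There is essentially no obstacle here: the estimate is a routine consequence of finite-dimensionality (compactness of $\mathcal{P}$) and the fact that commutators with $C_0$ contribute a factor $\|C_0\|_{\text{op}} \leq \|C_0\|$ each time, which is exactly what lets the two factors of $C_0$ on each side of $\ddot{\epsilon}$ combine into the desired $\|C_0\|^{2}$. The only point that needs a little care is to express both occurrences of $C_0$ via the operator-norm version of the commutator inequality, so that the remaining factor ($D_t$ in the first term, $[C_0,D_t]$ in the second) is controlled in Frobenius norm by a quantity bounded uniformly on $\mathcal{P}$.
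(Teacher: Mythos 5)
Your proof is correct and follows the same overall plan as the paper's: split $\ddot\epsilon$ into its two trace terms and bound each one using commutator estimates to extract the factor $\|C_0\|^2$, with finite dimensionality ensuring the remaining operators are uniformly bounded on $\mathcal P$. The main difference is in how the two ``outer'' factors are controlled. You abstract them into constants $M_F$ (a uniform Frobenius-norm bound on $F_t$) and $M_G$ (the operator norm of the fixed linear map $G$ on $M_{N_b}(\mathbb R)$), invoking finite dimensionality to assert their existence without computing them. The paper instead derives explicit, physically meaningful bounds: $\|F(D)\|_{\mathrm{op}}\le\tfrac12\|-\Delta\|_{\mathrm{op}}+2(2N+Z)\sqrt{\|-\Delta\|_{\mathrm{op}}}$ and an operator-norm bound on $G(\cdot)$ via the Hardy inequality, pairing $\|F_t\|_{\mathrm{op}}$ and $\|G\|_{\mathrm{op}}$ against trace-type norms of the commutators rather than using pure Frobenius Cauchy--Schwarz throughout. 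Both routes are valid for the lemma as stated; the paper's route is more work but buys an explicit $\alpha=\|-\Delta\|_{\mathrm{op}}+4(6N+Z)\sqrt{\|-\Delta\|_{\mathrm{op}}}$ whose dependence on $N$, $Z$, and the discrete Laplacian is what later allows the author to reason about stepsize choice and convergence-rate scaling. Your constants $M_F,M_G$ are opaque by comparison (in particular $M_G$ could a priori grow badly with the basis size $N_b$), so while your argument certainly proves the lemma, it would not directly support the quantitative discussion that follows in the paper.
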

\begin{proof}
  \begin{align}
    \label{curv}
    \od[2] \epsilon t &= \Tr(F_t [C_0,[C_0,D_t]]) + \Tr (G([C_0,D_t])[C_0,D_t]).
  \end{align}

  First, since the Laplacian in $F(D)$ acts on a finite dimensional
  space, we can bound $F(D)$:
  \begin{align}
    \norm{F(D)}_{\text{op}} &\leq \frac 1 2 \norm{-\Delta}_{\text{op}}
    + \norm{V}_{\text{op}} + \norm{G(D)}_{\text{op}}\notag\\
    \label{bound_F}&\leq \frac 1 2 \norm{-\Delta}_{\text{op}} + 2(2N+Z) \sqrt{\norm{-
        \Delta}_\text{op}}
  \end{align}
  by the Hardy inequality. Next, making use of the operator inequality
  $\Tr(AB) \leq \norm{A}_{\text{op}} \norm{B}$, we show that
  \begin{align*}
    \Tr(F_t [C_0,[C_0,D_t]]) &\leq 2 \lp \frac 1 2
    \norm{-\Delta}_{\text{op}} + 2(2N+Z) \sqrt{\norm{-
        \Delta}_\text{op}}\rp \norm{C_{0}}^{2}.
  \end{align*}

  For the second term of \eqref{curv},
  \begin{align*}
    \Tr (G([C_0,D_t])[C_0,D_t])&\leq \norm{G([C_0,D_t])}_\text{op} \Tr\lp\abs{[C_0,D_t]}\rp\\
    &\leq 4  \sqrt {\norm{-\Delta}_\text{op}}
    \Tr\lp\abs{[C_0,D_t]}\rp^2\\
    &\leq 16 N \sqrt {\norm{-\Delta}_\text{op}} \norm{C_{0}}^{2}.
  \end{align*}

  The result is now proved with
  \begin{align*}
    \alpha = \norm{-\Delta}_\text{op} + 4(6N+Z) \sqrt{\norm{- \Delta}_\text{op}}.
  \end{align*}

\end{proof}
\subsection{Choice of the stepsize}

We can now expand the energy:
\begin{align*}
  \epsilon(t) \leq \epsilon(0) - t \norm{C_0}^2 + \frac {t^2} 2 \alpha \norm{C_0}^2.
\end{align*}

If we choose
\begin{align}
  t &< \frac 2 {\alpha},\label{stepsize_limit}
\end{align}
we obtain a decrease of the energy
\begin{align}
  \epsilon(t) &\leq \epsilon(0) - \beta \norm{C_0}^2\label{energy_decrease}
\end{align}
with $\beta = t - \frac  {t^{2}}2 \alpha > 0$.

The optimal choice of $t$ with this bound on the curvature is $t =
\frac 1 \alpha$, with $\beta = \frac 1 {2 \alpha}$. Of course it could
be that the actual optimal $t$ is different, and could vary wildly,
which is why we will not consider optimal stepsizes.

\subsection{Study of $D_{k+1} - D_k$}
We now prove that our iteration $D_{k+1} = U_{k} D_{k} U_{k}^{T}$
coincides with an unconstrained gradient method up to first order in
$t$.

We say that $M \in o(\norm N)$ when for all $\eps > 0$, there is a
neighborhood $U$ of zero such that when $N \in U$, $\norm{M} \leq \eps
\norm{N}$. Note that this neighborhood $U$ is not allowed to depend on
$N$, meaning that the resulting bound is uniform, which will allow us
to manipulate the remainders more easily.
\begin{lemma}
  \label{lemma_first_order}
  For any $k$,
  \begin{align*}
    D_{k+1} &= D_{k} + t [C_{k},D_{k}] + o(t \norm{C_{k}}).
  \end{align*}
\end{lemma}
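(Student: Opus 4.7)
The plan is to recognise that $D_{k+1} = U_k D_k U_k^T = e^{tC_k} D_k e^{-tC_k}$ is precisely the conjugation by the one-parameter group $\exp(\cdot\,C_k)$ acting on $D_k$, so the result follows from the standard Taylor expansion of $\mathrm{Ad}_{\exp(tC_k)}(D_k)$ and a uniform estimate on its tail.

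More concretely, I would first write
\begin{align*}
  D_{k+1} = e^{tC_k} D_k e^{-tC_k} = \sum_{n=0}^{\infty} \frac{t^n}{n!} \mathrm{ad}_{C_k}^n(D_k),
\end{align*}
so that the zeroth and first order terms give exactly $D_k + t[C_k, D_k]$, and the remainder is the sum starting from $n=2$. To control this remainder I would use the elementary bound $\|[A,B]\| \le 2\|A\|_{\mathrm{op}}\|B\|$ together with $\|C_k\|_{\mathrm{op}} \le \|C_k\|$. Iterating this on $\mathrm{ad}_{C_k}^n(D_k) = [C_k,\dots[C_k,D_k]\dots]$ and noting that $\|[C_k,D_k]\| \le 2\|C_k\|$ (because $\|D_k\|_{\mathrm{op}}=1$), I get
\begin{align*}
  \bigl\|\mathrm{ad}_{C_k}^n(D_k)\bigr\| \le 2^{n}\,\|C_k\|^{n}, \qquad n \ge 1.
\end{align*}

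Plugging this into the tail of the series gives
\begin{align*}
  \bigl\| D_{k+1} - D_k - t[C_k, D_k] \bigr\|
  \;\le\; \sum_{n=2}^{\infty} \frac{(2t\|C_k\|)^n}{n!}
  \;=\; e^{2t\|C_k\|} - 1 - 2t\|C_k\|,
\end{align*}
which is $O\bigl((t\|C_k\|)^2\bigr)$ as $t\|C_k\| \to 0$. Dividing by $t\|C_k\|$ shows the remainder is $o(t\|C_k\|)$ in the uniform sense introduced just before the lemma, concluding the proof.

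There is no real obstacle here: the computation is a routine Lie-algebraic Taylor expansion. The only point worth emphasising is that the estimate must be \emph{uniform} in $k$ (i.e.\ in $D_k$), which is automatic because the bound $\|D_k\|_{\mathrm{op}}=1$ holds for every point of $\mathcal P$, so the constants in the tail estimate do not depend on the iterate. This uniformity is exactly what the little-$o$ notation of the paper requires.
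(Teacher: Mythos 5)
Your proof is correct and follows essentially the same route as the paper: expand $D_{k+1}=\exp(tC_k)D_k\exp(-tC_k)$ as a Lie series, peel off the zeroth and first order terms, and bound the tail by iterating the commutator estimate. The paper organizes the tail slightly differently — it factors out $\norm{[C_k,D_k]}$ (which equals $\norm{C_k}$ by the identity $\norm{[D,[D,M]]}=\norm{[D,M]}$ established in Section~\ref{manifold}) and sums a geometric series — but both arguments yield the same uniform $o(t\norm{C_k})$ bound.
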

\begin{proof}
  \begin{align*}
    D_{k+1} - D_k - t [C_k,D_k] &= \sum_{n=2}^\infty \frac {t^n}{n!}
    \underbrace{[C_k,[C_k,\dots[C_k, D_k]\dots]]}_{n \text{ times } C_k}\\
    \norm{D_{k+1} - D_k - t [C_k,D_k]}&\leq t
    \norm{[C_k,D_k]}\sum_{n=2}^\infty t^{n-1} \norm{C_k}^{n-1}\\
    &\leq t \norm{[C_k,D_k]}\frac{t \norm{C_k}}{1 - t \norm{C_k}}
  \end{align*}
\end{proof}

\section{Convergence of the gradient algorithm}
\label{conv}
\begin{theorem}[Convergence of the gradient algorithm]
  \label{thm_cv}
  Let $D_0 \in \mathcal P$ be any density matrix and $D_k$ be the
  sequences of iterates generated from $D_0$ by $D_{k+1} = U_k D_k
  U_{k}^{T}$, with stepsize $t < \frac 2 \alpha$. Then $D_k$ converges
  towards a solution of the Hartree-Fock equations.
\end{theorem}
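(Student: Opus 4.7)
The strategy is the standard \loja argument adapted to the manifold setting: first establish that the iterates accumulate at a critical point, and then upgrade subsequential convergence to full convergence by a telescoping-sum argument fueled by the \loja inequality.

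As a preliminary step, I would note that $\mathcal P$ is compact and $E$ is continuous, hence bounded below on $\mathcal P$. Combining \eqref{energy_decrease} with this lower bound gives $\sum_{k} \beta \norm{C_k}^2 \leq E(D_0) - \inf_{\mathcal P} E < \infty$, so $\norm{C_k} \to 0$. Since $\nabla_{\mathcal P} E(D_k) = [D_k,[D_k,F_k]]$ and the commutator identity noted in Section \ref{manifold} yields $\norm{[D_k,[D_k,F_k]]} = \norm{C_k}$, this says $\norm{\nabla_{\mathcal P} E(D_k)} \to 0$. By compactness of $\mathcal P$, the sequence $(D_k)$ admits at least one accumulation point $D^\star$, and by continuity $\nabla_{\mathcal P} E(D^\star) = 0$, so $D^\star$ solves the Hartree-Fock equations.

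Next I would run the \loja telescoping argument. Pick a subsequence $D_{k_j} \to D^\star$ and apply Theorem \ref{lojap} at $D^\star$ to obtain a neighborhood $U$ and constants $\kappa, \theta$. On $U$, set $a_k = (E(D_k) - E(D^\star))^{\theta}$ (which is well defined as $E(D_k) \geq E(D^\star)$ once the argument is set up, using monotonicity of the energy). Since $s \mapsto s^{\theta}$ is concave for $\theta \in (0,1/2]$, concavity combined with \eqref{energy_decrease} gives
\begin{align*}
a_k - a_{k+1} \;\geq\; \theta\,(E(D_k)-E(D^\star))^{\theta-1}(E(D_k)-E(D_{k+1})) \;\geq\; \frac{\theta\,\beta}{\kappa}\,\norm{C_k},
\end{align*}
where the second inequality uses the \loja bound $(E(D_k)-E(D^\star))^{1-\theta} \leq \kappa\norm{C_k}$. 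Lemma \ref{lemma_first_order} then yields $\norm{D_{k+1} - D_k} \leq 2t\norm{C_k}$ once $\norm{C_k}$ is sufficiently small, so
\begin{align*}
a_k - a_{k+1} \;\geq\; \gamma\,\norm{D_{k+1} - D_k}
\end{align*}
for some $\gamma > 0$, as long as $D_k \in U$. Summing over a range of indices where all iterates stay in $U$ bounds the total variation $\sum \norm{D_{k+1}-D_k}$ by $a_{k_0}/\gamma$, which is finite; this will make the sequence Cauchy and force $D_k \to D^\star$.

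The one technical point, and the main obstacle, is the bootstrap that ensures the iterates never leave the \loja neighborhood $U$, since the inequality $a_k - a_{k+1} \geq \gamma\norm{D_{k+1}-D_k}$ is only available while $D_k \in U$. I would handle this by the usual induction: fix a ball $B(D^\star,\rho) \subset U$, then choose $k_0$ along the converging subsequence so large that simultaneously $\norm{D_{k_0} - D^\star} < \rho/2$, $a_{k_0}/\gamma < \rho/2$, $\norm{C_k}$ is small enough that the remainder in Lemma \ref{lemma_first_order} is controlled, and $E(D_k) > E(D^\star)$ (otherwise monotonicity plus \loja forces $C_k = 0$ and the sequence becomes stationary). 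Assuming inductively that $D_{k_0}, \dots, D_k$ all lie in $B(D^\star,\rho)$, the telescoping inequality gives $\sum_{j=k_0}^{k} \norm{D_{j+1} - D_j} \leq a_{k_0}/\gamma < \rho/2$, and the triangle inequality yields $\norm{D_{k+1} - D^\star} < \rho$. The induction closes, the series $\sum \norm{D_{k+1}-D_k}$ converges, and $D_k \to D^\star$.
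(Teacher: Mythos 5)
Your proof is correct, but it organizes the \loja localization differently from the paper, and the difference is worth spelling out. You fix a single accumulation point $D^\star$, apply Theorem~\ref{lojap} in a ball $B(D^\star,\rho)$, and then run a bootstrap induction to guarantee that the iterates never leave that ball: this requires choosing $k_0$ so that several smallness conditions hold simultaneously (proximity of $D_{k_0}$ to $D^\star$, smallness of the remaining ``\loja budget'' $a_{k_0}/\gamma$, and smallness of $\norm{C_k}$ so Lemma~\ref{lemma_first_order} gives a two-sided comparison between $\norm{D_{k+1}-D_k}$ and $t\norm{C_k}$). This is the classical \loja--Simon confinement argument. The paper instead applies the \loja inequality at \emph{every} point of the compact level set $\Gamma=\{D\in\mathcal P:E(D)=E_\infty\}$, extracts a finite subcover, and obtains \emph{uniform} constants $\kappa,\theta,\delta$ valid whenever $d(D,\Gamma)<\delta$. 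It then shows $d(D_k,\Gamma)\to 0$ by a compactness/contradiction argument, so that for all $k\ge k_0$ the inequality applies with no induction on the iterates. This avoids the bootstrap entirely and yields immediately that $\sum\norm{D_{k+1}-D_k}<\infty$; the paper then observes afterwards that, once convergence to $D_\infty$ is known, one may replace $\kappa,\theta$ by the constants local to $D_\infty$, which is the form used for the convergence-rate theorem. Both routes are sound; yours is more elementary and self-contained at a single point, while the paper's trades the induction for a compactness cover and is slightly cleaner when one wants the result stated uniformly over the whole critical level set. One small point of bookkeeping in your version: you should note explicitly that $E(D_k)\ge E(D^\star)$ for all $k$ follows from monotonicity of the energy together with $E(D^\star)=\lim_k E(D_k)$ (since $D^\star$ is an accumulation point of a monotone sequence of energies), which is what licenses taking $a_k=(E(D_k)-E(D^\star))^\theta$; you gesture at this but it is cleaner to state it up front.
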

\begin{proof}
  The energy $E(D)$ is bounded from below on $\mathcal P$, and
  therefore $E_k$ converges to a limit $E_\infty$. In the sequel we
  will work for convenience with $\tilde E(D) = E(D) - E_\infty$ and
  drop the tildes. Immediately, summing \eqref{energy_decrease}
  implies that $C_k$ converges to 0, and therefore so does $D_k -
  D_{k-1}$ (this is what Cancès and Le Bris call ``numerical
  convergence'' in \cite{cances2000}). Note that we only get that
  $\norm{D_k - D_{k-1}}^2$ is summable, which alone is not enough to
  guarantee convergence (the harmonic series $x_k = \sum_{l=1}^k 1/l$
  is a simple counter-example). To obtain convergence, we shall use
  the \loja inequality.

  Let us denote by $\Gamma$ the level set $\Gamma =\{D \in \mathcal P,
  E(D) = 0\}$. It is non-empty and compact. We apply the \loja
  inequality to every point of $\Gamma$ to obtain a cover $(U_i)_{i
    \in \mathcal I}$ of $\Gamma$ in which the \loja inequality holds
  with constants $\kappa_i, \theta_i$.
  
  By compactness, we extract a finite subcover from the $U_i$, from
  which we deduce $\delta > 0$, $\kappa > 0$ and $\theta \in (0,1/2]$
  such that whenever $d(D,\Gamma) < \delta$,
  \begin{align}
    E(D)^{1-\theta} \leq \kappa \norm{\nabla_{\mathcal P}E(D)} =
    \kappa \norm{[D,C_D]} = \kappa \norm{C_D}.
  \end{align}
  (recall from Section \ref{manifold} that $\norm{[D,[D,M]]} = \norm{[D,M]}$
  for $M$ symmetric)

  To apply the \loja inequality to our iteration, it remains to show
  that $d(D_k, \Gamma)$ tends to zero. Suppose this is not the
  case. Then we can extract a subsequence, still denoted by $D_k$,
  such that $d(D_k, \Gamma) \geq \eps$ for some $\eps > 0$. By
  compactness of $\mathcal P$ we extract a subsequence that converges
  to a $D$ such that $d(D, \Gamma) \geq \eps$ and (by continuity)
  $E(D) = 0$, a contradiction. Therefore $d(D_k, \Gamma) \to 0$, and
  for $k$ larger than some $k_0$,
  \begin{align}
    E(D_k)^{1-\theta} &\leq \kappa \norm{C_{k}}. \label{loja_final}
  \end{align}

  For $k \geq k_{0}$,
  \begin{align*}
    E( D_{k})^\theta - E( D_{k+1})^\theta &\geq \frac{\theta}{E( D_{k})^{1-\theta}}
    (E( D_{k}) - E( D_{k+1}))\\
    &\geq \frac \theta {\kappa \norm{C_k}}(E( D_{k}) - E( D_{k+1}))\\
    &\geq \frac {\theta \beta} {\kappa}  \norm{C_{k}}\\
    &\geq \frac{\theta \beta}{\kappa t} \norm{D_{k+1} - D_k}
    + o(\norm{D_{k+1} - D_{k}})
  \end{align*}
  hence
  \begin{align}
    \frac{\theta \beta}{\kappa t} \norm{D_{k+1} - D_k} + o(\norm{D_{k+1} - D_{k}})
    \leq E( D_{k})^\theta - E(
    D_{k+1})^\theta.\label{dk_cauchy}
  \end{align}

  As the right-hand side is summable, so is the left-hand side, which
  implies that $\sum \norm{D_{k+1} - D_k} < \infty$.
  $D_k$ is therefore Cauchy and converges to some limit
  $D_\infty$. $C_k \to 0$, so $D_\infty$ is a critical point.

  Note that now that we know that $D_{k}$ converges to $D_{\infty}$,
  we can replace the $\theta$ and $\kappa$ in this inequality by the
  ones obtained from the \loja inequality around $D_{\infty}$ only.
\end{proof}

Let
\begin{align*}
  e_k = \sum_{l=k}^\infty \norm{D_{l+1} - D_l}.
\end{align*}
This is a (crude) measure of the error at iteration number $k$. In
particular, $\norm{D_k - D_\infty} \leq e_k$.

\begin{theorem}[Convergence rate of the gradient algorithm]
  \
  \begin{enumerate}
  \item If $\theta = 1/2$ (non-degenerate case), then for any $\nu' <
    \frac \beta {2\kappa^{2}}$, there exists $c > 0$ such that
    \begin{align}
      e_k \leq c (1 - \nu')^{k}.
    \end{align}

  \item If $\theta \neq 1/2$ (degenerate case), then there exists $c >
    0$ such that
    \begin{align}
      e_k \leq c k^{-\frac {\theta}{1 - 2\theta}}.
    \end{align}

  \end{enumerate}

\end{theorem}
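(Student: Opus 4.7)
The plan is to combine the two main ingredients already proved (the energy decrease \eqref{energy_decrease} and the \loja inequality \eqref{loja_final}) into a recursion on the energy errors $a_k := E(D_k)$, and then transfer the resulting decay rate on $a_k$ to a decay rate on $e_k$ via the telescoping estimate already implicit in the proof of Theorem \ref{thm_cv}. More precisely, note that summing \eqref{dk_cauchy} from $k$ to $\infty$ gives, for $k \geq k_0$ large enough to absorb the $o(\|D_{k+1}-D_k\|)$ term,
\begin{align*}
  e_k \leq \frac{2\kappa t}{\theta \beta}\, a_k^{\theta}.
\end{align*}
So it suffices to estimate $a_k$ itself. Substituting the \loja bound $\|C_k\|^2 \geq a_k^{2(1-\theta)}/\kappa^2$ into the energy decrease yields the fundamental recursion
\begin{align*}
  a_{k+1} \leq a_k - \frac{\beta}{\kappa^2}\, a_k^{2(1-\theta)}.
\end{align*}

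For the non-degenerate case $\theta = 1/2$, the exponent $2(1-\theta)$ equals $1$, so the recursion becomes geometric: $a_{k+1} \leq (1 - \beta/\kappa^2)\, a_k$, giving $a_k \leq C\,(1-\beta/\kappa^2)^{k}$ and hence $e_k \leq C'\,(1-\beta/\kappa^2)^{k/2}$. Setting $r = \sqrt{1-\beta/\kappa^2}$ and using the strict inequality $\sqrt{1-x} < 1 - x/2$ for $x \in (0,1]$ shows $r < 1 - \beta/(2\kappa^2)$, so for any $\nu' < \beta/(2\kappa^2)$ we have $r < 1 - \nu'$ and the announced bound $e_k \leq c\,(1-\nu')^k$ follows.

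For the degenerate case $\theta \in (0,1/2)$, the recursion is sublinear. The standard trick is to study $a_k^{-p}$ for $p := 1 - 2\theta > 0$. By the mean value theorem applied to $x \mapsto x^{-p}$ between $a_{k+1}$ and $a_k$,
\begin{align*}
  a_{k+1}^{-p} - a_k^{-p} \;\geq\; p\, a_k^{-p-1}(a_k - a_{k+1}) \;\geq\; \frac{p\beta}{\kappa^2}\, a_k^{2(1-\theta) - p - 1} = \frac{p\beta}{\kappa^2},
\end{align*}
since our choice of $p$ makes the exponent vanish. Summing in $k$ gives $a_k^{-p} \geq p\beta k/\kappa^2 + O(1)$, hence $a_k \leq C\,k^{-1/(1-2\theta)}$, and therefore
\begin{align*}
  e_k \leq C'\, a_k^{\theta} \leq c\, k^{-\theta/(1-2\theta)},
\end{align*}
which is the claim.

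The main obstacle is really just bookkeeping: one has to make sure that the constants $\kappa$, $\theta$, $\beta$ used in the recursion are precisely those furnished by the \loja inequality at the \emph{limit point} $D_\infty$ (as indicated at the end of the proof of Theorem \ref{thm_cv}), and that the $o(\|D_{k+1}-D_k\|)$ remainder from Lemma \ref{lemma_first_order} is handled before extracting the factor $\kappa t/(\theta\beta)$ in the estimate $e_k \leq C a_k^\theta$; this is why the rate in Case 1 is stated as a strict inequality $\nu' < \beta/(2\kappa^2)$ rather than with equality.
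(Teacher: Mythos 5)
Your proof is correct, and it takes a genuinely different route from the paper's. The paper derives a recursion directly on $e_k$: starting from \eqref{dk_cauchy} it obtains $e_{k+1} \leq e_k - \nu e_k^{(1-\theta)/\theta} + o\bigl(e_k^{(1-\theta)/\theta}\bigr)$, then handles the degenerate case by an inductive comparison with the explicit ansatz $y_k = c\,k^{-p}$, $p = \theta/(1-2\theta)$. You instead derive a recursion on the energy errors $a_k := E(D_k)$ (which follows immediately from \eqref{energy_decrease} and \eqref{loja_final}), solve it by the classical device of studying $a_k^{-p}$ with $p = 1-2\theta$ so that the mean-value-theorem step telescopes, and only at the end transfer the bound to $e_k$ through the summed form of \eqref{dk_cauchy}, namely $e_k \lesssim a_k^\theta$. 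Both routes rely on the same two lemmas (energy decrease plus \loja) and give the same exponent $\theta/(1-2\theta)$; your decomposition is somewhat cleaner and more modular, and in the non-degenerate case it even yields a slightly sharper rate $\sqrt{1-\beta/\kappa^2}$ before you relax it to match the stated $1-\nu'$. One small point worth making explicit: for the geometric case you implicitly need $1-\beta/\kappa^2 \geq 0$, but this is automatic because $a_{k+1} \geq 0$ forces $\beta \leq \kappa^2$ whenever $a_k > 0$, so no extra hypothesis is hidden.
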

\begin{proof}
  Summing \eqref{dk_cauchy} from $l = k$ to $\infty$ with $k \geq
  k_{0}$, we obtain
  \begin{align*}
    e_k + o({e_{k}})&\leq\frac{\kappa t}{\theta \beta}E(D_k)^\theta \\
    \lp\frac{\theta \beta}{\kappa t}e_k + o({e_{k}})\rp^{\frac{1-\theta}\theta} &\leq E(D_k)^{1-\theta}\\
    &\leq\kappa \norm{C_k}\\
    &\leq \frac \kappa {t} (e_{k} - e_{k+1}) + o({e_{k} - e_{k+1}})
  \end{align*}
  Hence,
  \begin{align*}
    e_{k+1} &\leq e_k - \nu e_k^{\frac{1-\theta}\theta} + o({e_k}^{\frac{1-\theta}\theta}), \text{ with}\\
    \nu &= \frac{t}{\kappa} \lp\frac{\theta \beta}{\kappa t
    }\rp^{\frac{1-\theta}\theta}
  \end{align*}
  Two cases are to be distinguished. If $\theta = \frac12$, the above
  inequality reduces to
  \begin{align*}
    e_{k+1} &\leq (1 - \nu + o (1)) e_{k}
  \end{align*}
  with $\nu = \frac \beta {2 \kappa^{2}}$ and the result follows.

  The case $\theta \neq 1/2$ is more involved.  We define
  \begin{align*}
    y_k &= c k^{-p},
  \end{align*}
  which verifies
  \begin{align*}
    y_{k+1} &= c (k+1)^{-p}\\
    &= ck^{-p} (1 + 1/k)^{-p}\\
    &\geq ck^{-p} (1 - \frac p k)\\
    &\geq y_k (1 - p c^{-\frac 1 p} y_k^{\frac 1 p})
  \end{align*}

  We set
  $p = \frac {\theta}{1 - 2\theta}$
  and $c$ large enough so that $c > (\frac{\nu}{p})^{-p}$ and
  $y_{k_0} \geq e_{k_0}$. We then prove by induction $e_k \leq y_k$
  for $k \geq k_{0}$. The result follows by increasing $c$ to ensure
  that $e_{k} \leq y_{k}$, for $k < k_{0}$.
\end{proof}

In the non-degenerate case $\theta = 1/2$ (which was the case for the
numerical simulations we performed, see Section \ref{numres}), the
convergence is asymptotically geometric with rate $1 - \nu$, where
\begin{align*}
  \nu &= \frac{\beta}{2 \kappa^2}.
\end{align*}

With the choice $t = \frac 1 \alpha$ suggested by our bounds, the
convergence rate is
\begin{align*}
  \nu &= \frac 1 {4\kappa^2 \alpha}.
\end{align*}

\section{Convergence of the Roothaan algorithm}
The Roothaan algorithm (also called simple SCF in the literature) is
based on the observation that a minimizer $D$ of the energy satisfies
the \emph{Aufbau} principle: $D$ is the projector onto the space
spanned by the eigenvectors associated with the first $N$ eigenvalues
of $F(D)$. This suggests a simple fixed-point algorithm: take for
$D_{k+1}$ the projector onto the space spanned by the eigenvectors
associated with the first $N$ eigenvalues of $F(D_{k})$, and
iterate. Unfortunately, this procedure does not always work: in some
circumstances, oscillations between two states occur, and the
algorithm never converges. This behavior was explained mathematically
in \cite{cances2000}, where the authors notice that the Roothaan
algorithm minimizes the bilinear functional
\begin{align*}
  E(D, D') = \Tr (h(D+D')) + \Tr (G(D) D')
\end{align*}
with respect to its first and second argument alternatively. Thanks to
the \loja inequality, we can improve on their result and prove the
convergence or oscillation of the method.

The bilinear functional verifies $E(D, D') = E(D',D)$, $E(D,D) = 2
E(D)$. In fact, $\frac 1 2 E(\cdot,\cdot)$ is the symmetric bilinear
form associated with the quadratic form $E(\cdot)$. In the following,
we assume the uniform well-posedness hypothesis of \cite{cances2000},
\ie that there is a uniform gap of at least $\gamma >0$ between the
eigenvalues number $N$ and $N+1$ of $F(D_{k})$. Under this assumption,
it can be proven \cite{cances2003computational} that there is a
decrease of the bilinear functional at each iteration
\begin{align*}
  E(D_{k+1},D_{k+2}) &=E(D_{k+2},D_{k+1})\\
  &= \min_{D \in \mathcal P} E (D,D_{k+1})\\
  &\leq E(D_{k},D_{k+1}) - \gamma \norm{D_{k+2} - D_{k}}^{2}
\end{align*}

Since $E(\cdot,\cdot)$ is bounded from below on $\mathcal P \times
\mathcal P$, this immediately shows that $D_{k} - D_{k+2} \to 0$,
which shows that $D_{2k}$ and $D_{2k+1}$ converge up to extraction,
which was noted in \cite{cances2000}. We now prove convergence of
these two subsequences, again using the \loja inequality.

$E(\cdot,\cdot)$ is defined on $\mathcal P \times \mathcal P$, which
inherits the Riemannian structure of $\mathcal P$ by the natural inner
product $\lela (D_{1},D_{1}'), (D_{2},D_{2}')\rira = \lela D_{1},
D_{2}\rira + \lela D_{1}',D_{2}'\rira$. In this setting, the gradient is
\begin{align*}
  \nabla_{\mathcal P \times \mathcal P} E(D,D') &= \mat{{[D, F(D')]}\\{[D',F(D)]}}.
\end{align*}

and therefore, using the fact that $D_{k+1}$ (resp. $D_{k+2}$) and
$F(D_{k})$ (resp $F(D_{k+1})$) commute,
\begin{align*}
  \norm{\nabla_{\mathcal P \times \mathcal P} E(D_{k},D_{k+1})} &=
  \sqrt{\norm{[D_{k}, F(D_{k+1})]}^{2} + \norm{[D_{k+1}, F(D_{k})]}^{2}}\\
  &=\norm{[D_{k}, F(D_{k+1})]}\\
  &=\norm{[D_{k} - D_{k+2}, F(D_{k+1})]}\\
  &\leq 2 \norm{F(D_{k+1})}_{\text{op}} \norm{D_{k+2} - D_{k}}
\end{align*}

A trivial extension of Theorem \ref{lojap} to the case of a
functional defined on $\mathcal P \times \mathcal P$ shows that we
can apply the \loja inequality to $E(\cdot,\cdot)$. By the same
compactness argument as before, the inequality
\begin{align*}
  E(D_{k}, D_{k+1})^{1 - \theta'} &\leq \kappa'   \norm{\nabla_{\mathcal P \times \mathcal P} E(D_{k},D_{k+1})}\\
  &\leq 2 \kappa' \norm{F(D_{k+1})}_{\text{op}} \norm{D_{k+2} - D_{k}}
\end{align*}
holds for $k$ large enough, with constants $\kappa' > 0$ and $\theta'
\in (0,\frac 1 2]$.

The exact same reasoning as for the gradient algorithm proves the
following theorems
\begin{theorem}[Convergence/oscillation of the Roothaan algorithm]
  \label{theo-roothaan}
  Let $D_{0} \in \mathcal P$ such that the sequence $D_{k}$ of
  iterates generated by the Roothaan algorithms verifies the uniform
  well-posedness hypothesis with uniform gap $\gamma > 0$. Then the
  two subsequences $D_{2k}$ and $D_{2k+1}$ are convergent. If both
  have the same limit, then this limit is a solution of the
  Hartree-Fock equations.
\end{theorem}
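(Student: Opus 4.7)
The plan is to mirror the proof of Theorem~\ref{thm_cv}, replacing the single sequence $(D_k)$ on $\mathcal{P}$ by the paired sequence $(D_k, D_{k+1})$ on $\mathcal{P} \times \mathcal{P}$. All the ingredients needed for such a transfer are already assembled just above the statement: a monotone energy along $(D_k,D_{k+1})$ with quantitative decrease controlled by $\|D_{k+2} - D_k\|^2$, a gradient lower bound of the same form $\|D_{k+2} - D_k\|$, and a \loja inequality for $E(\cdot,\cdot)$ on $\mathcal{P} \times \mathcal{P}$.

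First I would set $E_\infty = \lim_k E(D_k, D_{k+1})$, which exists by monotonicity and the lower bound on $E$, and replace $E$ by $E - E_\infty$ so that we work with a non-negative quantity tending to zero. Exactly as in Section~\ref{conv}, one shows that the distance from $(D_k, D_{k+1})$ to the level set $\Gamma' = \{(D, D') \in \mathcal{P} \times \mathcal{P} : E(D, D') = 0\}$ tends to zero: otherwise a subsequence stays at distance at least $\eps > 0$, and compactness of $\mathcal{P} \times \mathcal{P}$ produces a limit point that lies in $\Gamma'$ yet is a positive distance from $\Gamma'$, a contradiction. A finite subcover of $\Gamma'$ by neighborhoods on which the \loja inequality holds then gives uniform constants $\kappa' > 0$ and $\theta' \in (0, 1/2]$ valid for all sufficiently large $k$.

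The key telescoping estimate combines three pieces for $k$ large: the bilinear decrease $E(D_k, D_{k+1}) - E(D_{k+1}, D_{k+2}) \geq \gamma \|D_{k+2} - D_k\|^2$; the uniform bound $\|F(D)\|_{\text{op}} \leq M$ from \eqref{bound_F}, giving $\|\nabla_{\mathcal{P} \times \mathcal{P}} E(D_k, D_{k+1})\| \leq 2M \|D_{k+2} - D_k\|$; and the \loja bound $E(D_k, D_{k+1})^{1 - \theta'} \leq 2\kappa' M \|D_{k+2} - D_k\|$. Concavity of $x \mapsto x^{\theta'}$ then yields
\begin{align*}
E(D_k, D_{k+1})^{\theta'} - E(D_{k+1}, D_{k+2})^{\theta'}
&\geq \frac{\theta'}{E(D_k, D_{k+1})^{1-\theta'}} \bigl(E(D_k, D_{k+1}) - E(D_{k+1}, D_{k+2})\bigr) \\
&\geq \frac{\theta' \gamma}{2 \kappa' M} \, \|D_{k+2} - D_k\|.
\end{align*}
The left-hand side is telescoping in $k$ and bounded by $E(D_{k_0}, D_{k_0+1})^{\theta'}$, so $\sum_k \|D_{k+2} - D_k\| < \infty$. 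Splitting this into even and odd indices yields $\sum_k \|D_{2k+2} - D_{2k}\| < \infty$ and $\sum_k \|D_{2k+3} - D_{2k+1}\| < \infty$, so both $(D_{2k})$ and $(D_{2k+1})$ are Cauchy in $\mathcal{P}$ and converge.

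Finally, if both subsequences share a common limit $D_\infty$, then $D_k \to D_\infty$. The Roothaan iteration sends $D_k$ to the spectral projector of $F(D_k)$ onto its $N$ lowest eigenvalues; the uniform gap hypothesis makes this spectral projector continuous in $D$ near $D_\infty$, so passing to the limit in $D_{k+1} = \text{Roothaan}(D_k)$ gives that $D_\infty$ is the projector on the $N$ lowest eigenvectors of $F(D_\infty)$, the Aufbau characterization of a Hartree-Fock solution. The only genuine obstacle is the bookkeeping needed to reconcile the two-step increment $D_{k+2} - D_k$ with the one-step telescoping in $E(D_k, D_{k+1})$; once one observes that $\|D_{k+2} - D_k\|$ plays for the Roothaan iteration the role that $\|D_{k+1} - D_k\|$ played in the gradient method, the remainder of the \loja argument transfers essentially verbatim.
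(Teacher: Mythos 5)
Your proposal is correct and follows the same route the paper intends: the paper's text explicitly says ``the exact same reasoning as for the gradient algorithm proves the following theorems,'' after having assembled the three ingredients you cite (the quantitative decrease $E(D_k,D_{k+1}) - E(D_{k+1},D_{k+2}) \geq \gamma \|D_{k+2}-D_k\|^2$, the gradient bound $\|\nabla_{\mathcal{P}\times\mathcal{P}} E(D_k,D_{k+1})\| \leq 2\|F\|_{\mathrm{op}}\|D_{k+2}-D_k\|$ from the commutation of $D_{k+1}$ with $F(D_k)$, and the \loja inequality on $\mathcal{P}\times\mathcal{P}$). Your telescoping estimate, the compactness argument for the level set, and the even/odd splitting of the summable increments are exactly the transfer the paper has in mind. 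The one place where you diverge slightly is the final step: the paper's reasoning (mirroring the gradient case, where $C_k \to 0$ forces the limit to be critical) would observe that $\nabla_{\mathcal{P}\times\mathcal{P}} E(D_k,D_{k+1}) \to 0$, hence $[D_\infty, F(D_\infty)] = 0$ when both subsequences share a limit; you instead pass to the limit in the Roothaan map using continuity of the spectral projector under the gap hypothesis, which gives the slightly stronger Aufbau characterization. Both routes are valid and short, so this is a cosmetic difference rather than a genuine departure.
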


\begin{theorem}[Convergence rate of the Roothaan algorithm]
  Let $D_{k}$ be the sequence of iterates generated by a uniformly
  well-posed Roothaan algorithm, and let
  \begin{align*}
    e_k = \sum_{l=k}^\infty \norm{D_{l+2} - D_l}.
  \end{align*}
  Then,
  \begin{enumerate}
  \item If $\theta' = 1/2$ (non-degenerate case), then for any $\nu' <
    \frac{\gamma}{8 {\kappa'}^{2} \norm{F}_{\text{op}}^{2}}$, where
    $\norm{F}_{\text{op}}$ is the uniform bound on $F$ proved in
    \eqref{bound_F}, there exists $c > 0$
    such that
    \begin{align}
      e_k \leq c (1 - \nu')^{k}.
    \end{align}

  \item If $\theta' \neq 1/2$ (degenerate case), then there exists $c >
    0$ such that
    \begin{align}
      e_k \leq c k^{-\frac {\theta'}{1 - 2\theta'}}.
    \end{align}

  \end{enumerate}
\end{theorem}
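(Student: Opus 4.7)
The plan is to imitate almost verbatim the proof of the convergence rate for the gradient algorithm in Section~\ref{conv}, working with the energy sequence $E_k := E(D_k, D_{k+1})$ (shifted so that $\lim_k E_k = 0$) and with two-step increments $\norm{D_{k+2} - D_k}$ in place of one-step increments. The two ingredients already in hand are the monotone energy decrease $E_k - E_{k+1} \geq \gamma \norm{D_{k+2} - D_k}^2$ from \cite{cances2003computational} and the Łojasiewicz inequality $E_k^{1-\theta'} \leq 2\kappa' \norm{F}_{\text{op}} \norm{D_{k+2} - D_k}$ just established on $\mathcal{P} \times \mathcal{P}$.

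Combining these by concavity of $x \mapsto x^{\theta'}$ gives
\begin{align*}
E_k^{\theta'} - E_{k+1}^{\theta'} \geq \theta' E_k^{\theta'-1}(E_k - E_{k+1}) \geq \frac{\theta' \gamma}{2\kappa' \norm{F}_{\text{op}}} \norm{D_{k+2} - D_k},
\end{align*}
and telescoping from $l=k$ to $\infty$ yields both the summability $\sum_l \norm{D_{l+2}-D_l}<\infty$ (reproving the convergence of the two subsequences of Theorem~\ref{theo-roothaan} as a by-product) and the key lower bound $\frac{\theta' \gamma}{2\kappa' \norm{F}_{\text{op}}} e_k \leq E_k^{\theta'}$. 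Injecting this back into the Łojasiewicz inequality, and using the identity $\norm{D_{k+2} - D_k} = e_k - e_{k+1}$, produces the recurrence
\begin{align*}
e_{k+1} \leq e_k - \nu' e_k^{(1-\theta')/\theta'}, \qquad \nu' = \frac{1}{2\kappa'\norm{F}_{\text{op}}} \lp \frac{\theta' \gamma}{2\kappa'\norm{F}_{\text{op}}} \rp^{\!(1-\theta')/\theta'},
\end{align*}
which is formally identical (modulo an absent $o(\cdot)$ term) to the one appearing in the gradient proof.

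The two cases are then dispatched exactly as before. For $\theta' = 1/2$ the exponent is one, so the recurrence reduces to $e_{k+1} \leq (1-\nu')e_k$ with $\nu' = \gamma/(8{\kappa'}^2 \norm{F}_{\text{op}}^2)$, giving the claimed geometric rate. For $\theta' \neq 1/2$, one sets $p = \theta'/(1-2\theta')$ and compares $e_k$ inductively with $y_k = c k^{-p}$ for $c$ large enough, exactly as in the degenerate branch of Theorem~\ref{thm_cv}.

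I do not anticipate any genuine obstacle: the argument is structurally the gradient-algorithm proof, and is in fact slightly \emph{cleaner} because the bound $\norm{\nabla_{\mathcal{P}\times\mathcal{P}} E(D_k,D_{k+1})} \leq 2\norm{F}_{\text{op}} \norm{D_{k+2} - D_k}$ established in the excerpt is exact, not a first-order expansion, so no $o(\cdot)$ remainders clutter the recurrence. The only detail requiring care, handled by the same compactness argument as in Theorem~\ref{thm_cv}, is that $d((D_k, D_{k+1}), \Gamma') \to 0$ where $\Gamma' = \{(D,D') \in \mathcal{P}\times\mathcal{P} : E(D,D') = E_\infty\}$, so that uniform Łojasiewicz constants $\kappa', \theta'$ can be used from some index on; this follows from the compactness of $\mathcal{P}\times\mathcal{P}$ together with $E_k \to E_\infty$.
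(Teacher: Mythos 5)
Your proposal is correct and follows essentially the same route the paper intends, namely transposing the gradient-algorithm argument (energy decrease plus \loja inequality, then the $e_k$ recurrence and the two branches $\theta'=1/2$ and $\theta'\neq 1/2$) to the bilinear functional $E(\cdot,\cdot)$ on $\mathcal P\times\mathcal P$ with two-step increments. Your observation that the $o(\cdot)$ remainders disappear here, because $\norm{\nabla_{\mathcal P\times\mathcal P}E(D_k,D_{k+1})}\le 2\norm{F}_{\text{op}}\norm{D_{k+2}-D_k}$ is an exact inequality rather than a first-order expansion of an exponential map, is a legitimate small simplification over the gradient case and yields precisely the constant $\nu'=\gamma/(8{\kappa'}^2\norm{F}_{\text{op}}^2)$ claimed.
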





\section{Level-shifting}
\label{ls}
The Level-Shifting algorithm was introduced in
\cite{saunders1973level} as a way to avoid oscillation in
self-consistent iterations. By shifting the $N$ lowest energy levels
(eigenvalues of $F$), one can force convergence, although denaturing
the equations in the process. We now prove the convergence of this
algorithm.

The same arguments as before apply to the functional
\begin{align*}
  E^{b}(D,D') &= \Tr (h(D+D')) + \Tr(G(D)D') + \frac b 2
  \norm{D-D'}^{2}\\
  &= \Tr (h(D+D')) + \Tr(G(D)D') - b\Tr(DD') + b N\\
\end{align*}
with associated Fock matrix $F^{b}(D) = F(D) - bD$. The difference
with the Roothaan algorithm is that for $b$ large enough, there is a
uniform gap $\gamma^{b} > 0$, and $D_{k} - D_{k+1}$ converges to 0
\cite{cances2000}. Therefore, we have the following theorems

\begin{theorem}[Convergence of the Level-Shifting algorithm]
  Let $D_{0} \in \mathcal P$. Then there exists $b_{0} > 0$ such
  that for every $b > b_{0}$, the sequence $D_{k}$ of iterates
  generated by the Level-Shifting algorithm with shift parameter b
  verifies the uniform well-posedness hypothesis with uniform gap
  $\gamma > 0$ and converges.
\end{theorem}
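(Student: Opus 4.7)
The plan is to transcribe, with minimal modification, the \loja argument used for the Roothaan algorithm in Theorem \ref{theo-roothaan}, now applied to the shifted bilinear functional $E^{b}$, and then to leverage the stronger estimate $\|D_{k+1}-D_{k}\|\to 0$ available in the level-shifting case to upgrade Cauchy-ness of the two subsequences into convergence of the full sequence. The inputs I take from \cite{cances2000} are that for $b$ larger than some $b_{0}$ the iteration is uniformly well-posed with a uniform gap $\gamma^{b}>0$, the sequence $E^{b}(D_{k},D_{k+1})$ is non-increasing, and $\|D_{k+1}-D_{k}\|\to 0$; subtracting a constant I may assume $E^{b}(D_{k},D_{k+1})\to 0$.

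The first step is a gradient bound on $\mathcal P\times\mathcal P$, which inherits the Riemannian structure of $\mathcal P$. From the symmetry $E^{b}(D,D')=E^{b}(D',D)$ and the computation $\nabla_{D}E^{b}(D,D')=F^{b}(D')$ one gets
\begin{align*}
  \nabla_{\mathcal P\times\mathcal P} E^{b}(D,D') = \mat{{[D,[D,F^{b}(D')]]}\\{[D',[D',F^{b}(D)]]}}.
\end{align*}
At the iterates, the construction of Level-Shifting forces $D_{k+1}$ to commute with $F^{b}(D_{k})$, so the second row vanishes, and $D_{k+2}$ commutes with $F^{b}(D_{k+1})$, so the first row may be rewritten as $[D_{k}-D_{k+2},F^{b}(D_{k+1})]$. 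Combined with the uniform bound $\|F^{b}\|_{\text{op}}\leq \|F\|_{\text{op}}+b$ coming from \eqref{bound_F}, this yields
\begin{align*}
  \|\nabla_{\mathcal P\times\mathcal P} E^{b}(D_{k},D_{k+1})\| \leq 2(\|F\|_{\text{op}}+b)\,\|D_{k+2}-D_{k}\|.
\end{align*}

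The second step is the \loja part. The uniform gap $\gamma^{b}$ supplies the quantitative decrease $E^{b}(D_{k},D_{k+1})-E^{b}(D_{k+1},D_{k+2})\geq \frac{\gamma^{b}}{2}\|D_{k+2}-D_{k}\|^{2}$, because $D_{k+2}$ minimizes $D'\mapsto \Tr(F^{b}(D_{k+1})D')$ on $\mathcal P$ and a gap $\gamma^{b}$ between the $N$-th and $(N+1)$-st eigenvalues of $F^{b}(D_{k+1})$ produces a quadratic lower bound on this linear objective away from its minimizer. Applying Theorem \ref{lojap} extended to $\mathcal P\times\mathcal P$ around the compact zero-level set of $E^{b}$, and arguing as in the proof of Theorem \ref{thm_cv} that the iterates eventually enter the \loja neighborhood, one obtains for $k$ large
\begin{align*}
  E^{b}(D_{k},D_{k+1})^{\theta'} - E^{b}(D_{k+1},D_{k+2})^{\theta'} \geq C\,\|D_{k+2}-D_{k}\|.
\end{align*}
Telescoping forces $\sum_{k}\|D_{k+2}-D_{k}\|<\infty$, so that the subsequences $(D_{2k})$ and $(D_{2k+1})$ are both Cauchy and converge in the compact set $\mathcal P$.

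To conclude I use the ingredient specific to Level-Shifting: since $\|D_{k+1}-D_{k}\|\to 0$, the limits of the even and odd subsequences must coincide, so $D_{k}$ converges to a common $D_{\infty}\in\mathcal P$. Passing to the limit in $[D_{k+1},F(D_{k})-bD_{k}]=0$ and using $D_{k+1}-D_{k}\to 0$ gives $[D_{\infty},F(D_{\infty})]=0$, so $D_{\infty}$ is in fact a solution of the Hartree-Fock equations. The main technical point I anticipate is the gap-based quantitative decrease of $E^{b}$ in terms of $\|D_{k+2}-D_{k}\|^{2}$; aside from that, the proof is a direct transcription of the Roothaan argument with $E$ replaced by $E^{b}$ and the free extra ingredient $\|D_{k+1}-D_{k}\|\to 0$ imported from \cite{cances2000}.
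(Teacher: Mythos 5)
Your proposal is correct and is essentially the paper's own route: the paper gives no detailed argument here, merely stating that ``the same arguments as before apply'' to $E^{b}$ with Fock matrix $F^{b}(D)=F(D)-bD$, combined with the two facts borrowed from Cancès and Le Bris that for $b$ large enough the iteration is uniformly well-posed with gap $\gamma^{b}>0$ and that $D_{k}-D_{k+1}\to 0$, which is precisely what you carry out. One small slip worth noting: the unconstrained partial gradient is $\nabla_{D}E^{b}(D,D')=h+G(D')+b(D-D')=F^{b}(D')+bD$, not $F^{b}(D')$; but the extra $bD$ is killed by the projection $[D,[D,\cdot]]$ onto $T_{D}\mathcal P$, so your formula for $\nabla_{\mathcal P\times\mathcal P}E^{b}$ and all the downstream estimates remain valid.
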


\begin{theorem}[Convergence rate of the Level-Shifting algorithm]
  Let $D_{k}$ be the sequence of iterates generated by the
  Level-Shifting with shift parameter $b > b_{0}$, and let
  \begin{align*}
    e_k = \sum_{l=k}^\infty \norm{D_{l+2} - D_l}.
  \end{align*}
  Then,
  \begin{enumerate}
  \item If $\theta' = 1/2$ (non-degenerate case), then for any $\nu' <
    \frac{\gamma^{b}}{8 {\kappa'}^{2} \norm{F^{b}}_{\text{op}}^{2}}$,
    there exists $c > 0$
    such that
    \begin{align}
      e_k \leq c (1 - \nu')^{k}.
    \end{align}

  \item If $\theta' \neq 1/2$ (degenerate case), then there exists $c >
    0$ such that
    \begin{align}
      e_k \leq c k^{-\frac {\theta'}{1 - 2\theta'}}.
    \end{align}

  \end{enumerate}
\end{theorem}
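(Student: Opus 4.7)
The plan is to run the Roothaan convergence-rate argument verbatim, with the bilinear form $E(\cdot,\cdot)$ replaced by its shifted counterpart $E^{b}(\cdot,\cdot)$, the Fock matrix $F$ replaced by $F^{b} = F - bD$, and the gap $\gamma$ replaced by $\gamma^{b}$. The preceding convergence theorem, together with the input from \cite{cances2000}, has already supplied the two Level-Shifting-specific facts I will use: for $b > b_{0}$ there is a uniform gap $\gamma^{b} > 0$, and the iterates $D_{k}$ converge (so in particular they eventually enter any prescribed neighborhood of the limit). The present task is therefore only to extract the quantitative rate.

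Three ingredients are assembled in sequence. First, $E^{b}$ differs from $E$ by a polynomial in the matrix entries, so it remains analytic on $\mathcal{P} \times \mathcal{P}$, and the natural extension of Theorem \ref{lojap} to the product manifold (using the retraction $R_{(D_{0},D_{0}')} = R_{D_{0}} \times R_{D_{0}'}$) furnishes constants $\kappa' > 0$ and $\theta' \in (0,1/2]$ such that $|E^{b}(D,D') - E^{b}_{\infty}|^{1-\theta'} \leq \kappa' \norm{\nabla_{\mathcal{P} \times \mathcal{P}} E^{b}(D,D')}$ near the limit. Second, by construction the Level-Shifting iterate $D_{k+1}$ is the spectral projector onto the $N$ lowest eigenvalues of $F^{b}(D_{k})$, so $[D_{k+1}, F^{b}(D_{k})] = 0$ and likewise $[D_{k+2}, F^{b}(D_{k+1})] = 0$; reproducing the gradient computation of the Roothaan case gives
\begin{align*}
  \norm{\nabla_{\mathcal{P} \times \mathcal{P}} E^{b}(D_{k}, D_{k+1})} &= \norm{[D_{k}, F^{b}(D_{k+1})]} \\
  &= \norm{[D_{k} - D_{k+2}, F^{b}(D_{k+1})]} \\
  &\leq 2 \norm{F^{b}}_{\text{op}} \norm{D_{k+2} - D_{k}}.
\end{align*}
Third, the well-posedness hypothesis with gap $\gamma^{b}$ yields the energy decrease $E^{b}(D_{k}, D_{k+1}) - E^{b}(D_{k+1}, D_{k+2}) \geq \gamma^{b} \norm{D_{k+2} - D_{k}}^{2}$, exactly as was exploited in the Roothaan case.

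Combining these ingredients exactly as in the telescoping step of the proof of Theorem \ref{thm_cv}---with effective stepsize $1$, $\beta$ replaced by $\gamma^{b}$, and $\kappa$ replaced by $2 \kappa' \norm{F^{b}}_{\text{op}}$---produces
\begin{align*}
  E^{b}(D_{k},D_{k+1})^{\theta'} - E^{b}(D_{k+1},D_{k+2})^{\theta'} \geq \frac{\theta' \gamma^{b}}{2 \kappa' \norm{F^{b}}_{\text{op}}} \norm{D_{k+2} - D_{k}}.
\end{align*}
Summation then yields the recursive bound $e_{k+1} \leq e_{k} - \nu e_{k}^{(1-\theta')/\theta'} + o(e_{k}^{(1-\theta')/\theta'})$, with $\nu = \gamma^{b}/(8 \kappa'^{2} \norm{F^{b}}_{\text{op}}^{2})$ in the case $\theta' = 1/2$. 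The two regimes are then dispatched by the same reasoning as in the gradient-algorithm rate theorem: geometric decay at any rate $1 - \nu'$ with $\nu' < \nu$ in the non-degenerate case, and comparison with $y_{k} = c k^{-\theta'/(1-2\theta')}$ in the degenerate case.

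I expect no substantive obstacle, since every step mirrors one already carried out in the Roothaan rate theorem. The only mild verification is that the product-manifold \loja inequality truly transfers without modification, which follows immediately from the proof of Theorem \ref{lojap} applied with the product retraction, the analyticity of $E^{b}$ being preserved by the addition of the quadratic shift term $\tfrac{b}{2}\norm{D-D'}^{2}$.
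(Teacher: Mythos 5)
Your proposal is correct and takes essentially the same approach as the paper, which itself merely states that ``the same arguments as before apply'' to the shifted functional $E^{b}$ and leaves the verification implicit; you have filled in exactly the substitutions ($E \to E^{b}$, $F \to F^{b}$, $\gamma \to \gamma^{b}$, $\kappa \to 2\kappa'\norm{F^{b}}_{\text{op}}$, effective stepsize $1$) that the paper intends, and the resulting constant $\nu = \gamma^{b}/(8\kappa'^{2}\norm{F^{b}}_{\text{op}}^{2})$ matches the statement. One cosmetic remark: the $o(e_{k}^{(1-\theta')/\theta'})$ remainders you carry over from the gradient-algorithm argument are actually unnecessary here, since in the fixed-point setting the quantity $\norm{D_{k+2}-D_{k}}$ is controlled directly rather than through the first-order approximation of Lemma~\ref{lemma_first_order}; the recursion for $e_{k}$ holds without error terms.
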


We can use this result to heuristically predict the behavior of the
algorithm when $b$ is large. $\gamma^{b}$ and
$\norm{F^{b}}_{\text{op}}$ both scale as $b$ for large values of
$b$. Assuming non-degeneracy, we can take $\kappa' > \frac 1 {\sqrt
  {2\abs{\lambda_{1}}}}$, where $\lambda_{1}$ is the eigenvalue of
smallest magnitude of the Hessian $H_{1} + \frac b 2 H_{2}$, where
$H_{1} = H_{\mathcal P \times \mathcal P} E (D^{\infty},D^{\infty})$
and $H_{2} = H_{\mathcal P \times \mathcal P} \norm{D - D'}^{2}
(D^{\infty},D^{\infty})$. But $H_{2}$ admits zero as an eigenvalue
(for instance, note that $\norm{D-D^{'}}^{2}$ is constant along the
curve $(D_{t},D_{t}') = (U_{t} D U_{t}^{T},U_{t} D' U_{t}^{T})$, where
$U_{t}$ is a family of orthogonal matrices), so that, when $b$ goes to
infinity, $\lambda_{1}$ tends to the eigenvalue of smallest magnitude
of $H_1$ restricted to the nullspace of $H_2$, and therefore stays
bounded. Therefore, $\nu'$ scales as $\frac 1 b$, which suggests that $b$
should not be too large for the algorithm to converge quickly.
\section{Numerical results}
\label{numres}

We illustrate our results on atomic systems, using gaussian basis
functions. The gradient method was implemented using the software
Expokit \cite{sidje1998expokit} to compute matrix exponentials. In our
computations, the cost of a gradient step is not much higher than a
step of the Roothaan algorithm, since the limiting step is computing
the Fock matrix, not the exponential. However, the situation may
change if the Fock matrix is computed using linear scaling
techniques. In this case, one can use more efficient ways of computing
geodesics, as described in \cite{edelman1998geometry}.

First, the \loja inequality with exponent $\frac 1 2$ was checked to
hold in the molecular systems and basis sets we encountered,
suggesting that the minimizers are non-degenerate. Consequently, we
never encountered sublinear convergence of any algorithm.

For a given molecular system and basis, we checked that the
Level-Shifting algorithm converged as $(1-\nu)^{k}$, where $\nu$ is
asymptotically proportional to $\frac 1 b$, which we predicted
theoretically in Section \ref{ls} (see \figref{fig:ls}). This means that the estimates we
used have at least the correct scaling behavior.

\begin{figure}[H]
  \centering
  \scalebox{0.5}{ \includegraphics{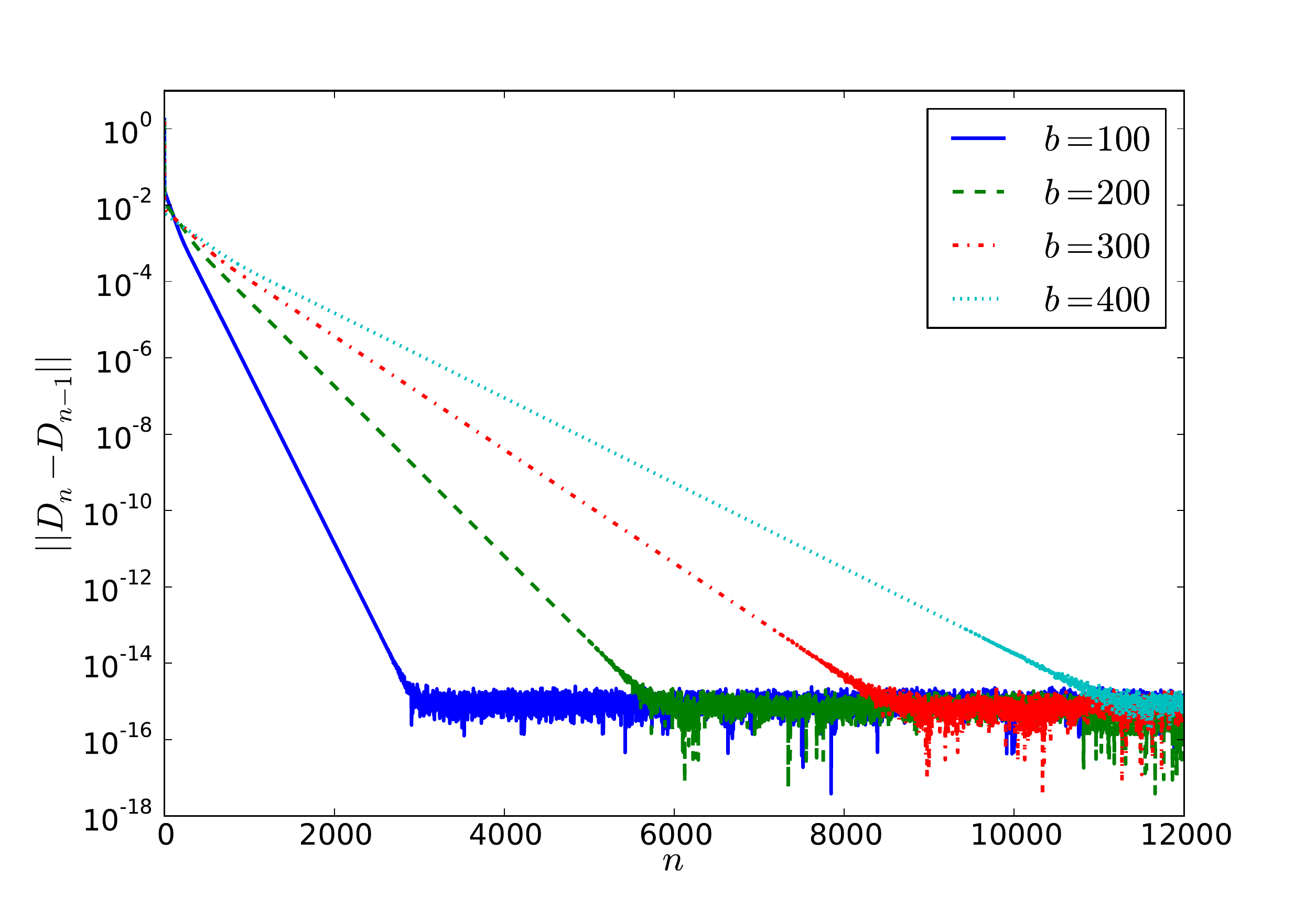}}
  \caption{Convergence of the Level-Shifting algorithm. The
    convergence is linear until machine precision. The horizontal
    spacing of the curves reveals the asymptotic relationship $\nu
    \propto \frac 1 b$. The system considered is the carbon atom
    ($N=Z=6$), under the RHF formalism, using the 3-21G gaussian basis
    functions.}
  \label{fig:ls}
\end{figure}

Next, we compared the efficiency of the Roothaan algorithm and of the
gradient algorithm, in the case where the Roothaan algorithm
converges. Our analysis leads to the estimate $\nu = \frac {\gamma}{8
  {\kappa'}^{2} \norm{F}_{\text{op}}^{2}}$ for the Roothaan algorithm,
and $\nu = \frac 1 {4\kappa^2 \alpha}$ for the gradient algorithm with
stepsize $t = \frac 1 \alpha$.

It is immediate to see that, up to a constant multiplicative factor,
$\kappa' > \kappa$, $\gamma \leq \norm{F}_{\text{op}}$ and for the
cases of interest $\alpha \approx \norm{F}_{\text{op}}$, so from our
estimates we would expect the gradient algorithm to be faster than the
Roothaan algorithm. However, in our tests the Roothaan algorithm was
considerably faster than the gradient algorithm (see
\figref{fig:roothaan-grad}). This conclusion holds even when the
stepsize is adjusted at each iteration with a line search.

\begin{figure}[h!]
  \centering
  \scalebox{0.5}{ \includegraphics{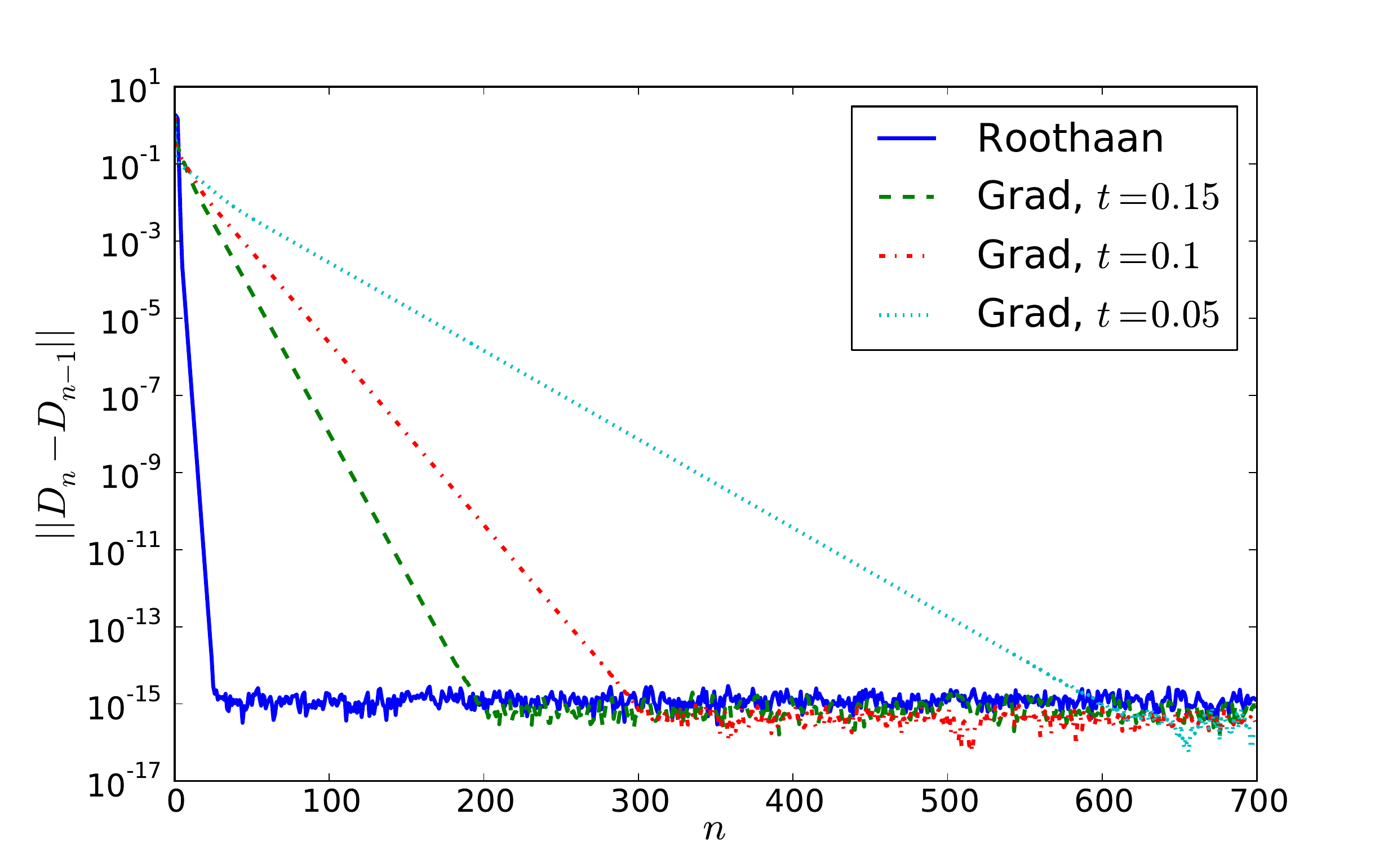}}
  \caption{Comparison of Roothaan and gradient algorithm. The system
    considered is the carbon atom ($N=Z=6$), under the RHF formalism,
    using the 3-21G gaussian basis functions.}
  \label{fig:roothaan-grad}
\end{figure}

The reason that the Roothaan algorithm performs better than expected
is that the inequality
$$\norm{[D_{k+2} - D_{k},F(D_{k+1})]} \leq 2 \norm{F(D_{k+1})}
\norm{D_{k+2} - D_{k}}$$ is very far from optimal. Whether an improved
bound (in particular, one that does not depend on the dimension) can
be derived is an interesting open question.

The outcome of these tests seems to be that the gradient algorithm is
slower. It might prove to be faster in situations where the gap is
small, or whenever $\kappa'$ is much larger than $\kappa$. We have
been unable to find concrete examples of such cases.

\section{Conclusion, perspectives}
In this paper, we introduced an algorithm based on the idea of
gradient descent. By using the analyticity of the objective function
and of the constraint manifold, we were able to derive a \loja
inequality, and use that to prove the convergence of the gradient
method, under the assumption of a small enough stepsize. Next,
expanding on the analysis of \cite{cances2000}, we extended the \loja
inequality to a Lyapunov function for the Roothaan algorithm. By
linking the gradient of this Lyapunov function to the difference in
the iterates of the algorithm, we proved convergence (or oscillation),
an improvement over previous results which only prove a weaker version
of this. In this framework, the Level-Shifting algorithm can be seen
as a simple modification of the Roothaan algorithm, and as such can be
treated by the same methods. In each case, we were also able to derive
explicit bounds on the convergence rates.

The strength of the \loja inequality is that no higher-order
hypothesis are needed for its use. As a consequence, the rates of
convergence we obtain weaken considerably if the algorithm converges
to a degenerate critical point. A more precise study of the local
structure of critical points is necessary to understand why the
algorithms usually exhibit geometric convergence. This is related to
the problem of local uniqueness and is likely to be hard (and, indeed,
to our knowledge has not been tackled yet).

Even though our results hide the complexity of the local structure in
the constants of the \loja inequality, they still provide insight as
to the influence of the basis on the speed of convergence, and can be
used to compare algorithms. All of our results use crucially the
hypothesis of a finite-dimensional Galerkin space. For the gradient
algorithm, we need it to ensure the existence of a stepsize that
decreases the energy. This is analogous to a CFL condition for the
discretization of the equation $\od D t = -[D,[D,F_{D}]]$, and can
only be lifted with a more implicit discretization of this
equation. For the Roothaan and Level-Shifting algorithms, we use the
finite dimension hypothesis to bound $F(D)$. As noted in Section
\ref{numres}, the inequality is not sharp, so it could be that the
infinite-dimensional version of the Roothaan and Level-Shifting
algorithms still converge. More research is needed to examine this.

The gradient algorithm we examined only converges towards a stationary
point of the energy, that may not be a local minimizer, or even an
Aufbau solution. However, it will generically converge towards a local
minimizer, unlike the Level-Shifting algorithm with large
$b$. Therefore, it is the most robust of the algorithms
considered. Although it was found to be slower than other algorithms
on the numerical tests we performed, it has the advantage that its
convergence rate does not depend on the gap $\lambda_{N+1} -
\lambda_{N}$, and might therefore prove useful in extreme situations.

An algorithm that could achieve the speed of the fixed-point
algorithms with the robustness granted by the energy monotonicity
seems to be the ODA algorithm of Cancès and Le Bris\cite{cances2000},
along with variants such as EDIIS, or combinations of EDIIS and DIIS
algorithms\cite{kudin2002black}. We were not able to examine these
algorithms in this paper. At first glance, the ODA algorithm should
fit into our framework (indeed, the ODA algorithm was built to satisfy
an energy decrease inequality similar to
\eqref{energy_decrease}). However, it works in a relaxed parameter
space $\tilde{P}$, and using the commutator to control the differences
of iterates as we did only makes sense on $\mathcal P$. Therefore,
other arguments have to be used.

A variant on the gradient algorithm used here is to modify the local
geometry of the manifold $\mathcal P$ by using a different inner
product, leading to a variety of methods, including conjugate gradient
algorithms\cite{edelman1998geometry}. These methods fit into our
framework, as long as one can prove that they are ``gradient-like'',
in the sense that one can control the gradient by the difference
$D_{k+1} - D_{k}$. However, precise estimates of convergence rates
might be hard to obtain.

Also missing from the present contribution is the study of other
commonly used algorithms, such as DIIS\cite{pulay1982improved}, and
variants of (quasi)-Newton
algorithms\cite{bacskay1981,host2008augmented}. DIIS numerically
exhibits a complicated behavior that is probably hard to explain
analytically, and the (quasi)-Newton algorithms require a study of the
second-order structure of the critical points, which we are unable to
do.

\section{Acknowledgments}
The author would like to thank Eric Séré for his extensive help,
Guillaume Legendre for the code used in the numerical simulations and
Julien Salomon for introducing him to the \loja inequality. He also
thanks the anonymous referees for many constructive remarks.

\bibliographystyle{plain}
\bibliography{refs.bib}

\end{document}